\newcommand{\timestamp}[1]{\ensuremath{\text{TS}}} 
\newcommand{\ie}{\emph{i.e.}}
\newcommand{\eg}{\emph{e.g.}}
\newcommand{\vertex}[2]{\langle #1, #2\rangle}
\newcommand{\GF}[1]{\operatorname{GF}(#1)}
\newcommand{\nb}[1]{\operatorname{nb}(#1)}
\renewcommand{\epsilon}{\varepsilon}
\newcommand{\Eprint}[1]{\url{#1}}
\newcommand{\mdp}[2]{\ensuremath{\text{\texttt{#1}}.\text{\texttt{#2}}}}
\newcommand{\class}[1]{\ensuremath{\text{\texttt{#1}}}}
\newcommand{\call}[2]{\ensuremath{\text{\texttt{#1}}(#2)}}
\newcommand{\treq}{\ensuremath{t^{\text{req}}}}
\newcommand{\return}[1]{\textbf{return} \(#1\)}
\makeatletter\newcommand{\href@noop}{}
\else\newcommand{\@bibitemShut}{}\fi\makeatother
\theoremstyle{remark}
\newcommand*{\cX}{\mathcal{X}}
\newcommand*{\cZ}{\mathcal{Z}}
\newcommand*{\eps}{\varepsilon}
\let\originalleft\left
\let\originalright\right
\renewcommand{\left}{\mathopen{}\mathclose\bgroup\originalleft}
\renewcommand{\right}{\aftergroup\egroup\originalright}
\theoremstyle{plain}
\newtheorem{thm}{Theorem}[section]  
\newtheorem{lem}[thm]{Lemma}
\theoremstyle{definition}
\newtheorem{deff}[thm]{Definition}
\theoremstyle{remark}
\renewcommand{\tensor}{\otimes}
\newcommand{\naturals}{\ensuremath{\mathbb{N}}}
\newcommand{\hconj}[1]{\ensuremath{#1^\dag}}
\DeclareMathOperator{\tr}{tr}
\newcommand{\strace}[2][@]{\ensuremath{\tr\ifthenelse{\equal{#1}{@}}{}{_{#1}}(#2)}}
\newcommand{\ltrace}[2][@]{\ensuremath{\tr\ifthenelse{\equal{#1}{@}}{}{_{#1}}\left(#2\right)}}
\newcommand{\ktrace}[2][@]{\ensuremath{\tr\ifthenelse{\equal{#1}{@}}{}{_{#1}}\left[#2\right]}}
\newcommand{\trace}[2][@]{\if@display\ltrace[#1]{#2}\else\strace[#1]{#2}\fi}
\newcommand{\HminOp}{H_{\min}}
\newcommand{\sHmin}[2][@]{\ensuremath{\HminOp(#2)\ifthenelse{\equal{#1}{@}}{}{_{#1}}}}
\newcommand{\lHmin}[2][@]{\ensuremath{\HminOp\left(#2\right)\ifthenelse{\equal{#1}{@}}{}{_{#1}}}}
\newcommand{\Hmin}[2][@]{\if@display\lHmin[#1]{#2}\else\sHmin[#1]{#2}\fi}
\newcommand{\snorm}[1]{\ensuremath{\|#1\|}}
\newcommand{\lnorm}[1]{\ensuremath{\left\|#1\right\|}}
\newcommand{\norm}[1]{\if@display\lnorm{#1}\else\snorm{#1}\fi}
\newcommand{\trnorm}[1]{\ensuremath{\norm{#1}_{\tr}}}
\DeclareMathOperator{\Ext}{Ext} 
\newcommand{\coloneqq}{:=}
\begin{document}

\renewcommand{\topfraction}{0.9}	
\renewcommand{\bottomfraction}{0.8}	

\setcounter{topnumber}{2}
\setcounter{bottomnumber}{2}
\setcounter{totalnumber}{4}     
\setcounter{dbltopnumber}{2}    
\renewcommand{\dbltopfraction}{0.9}	
\renewcommand{\textfraction}{0.07}	
\renewcommand{\floatpagefraction}{0.7}	
\renewcommand{\dblfloatpagefraction}{0.7}	

\author{Wolfgang Mauerer}\email{wolfgang.mauerer@siemens.com
  (corresponding author)}
\affiliation{Siemens AG, Corporate Research and Technologies,
  Wladimirstra{\ss}e 1, 91058 Erlangen, Germany}
\author{Christopher Portmann}\email{chportma@phys.ethz.ch}
\affiliation{ETH Z{\"u}rich, Institute for Theoretical Physics,
  Wolfgang-Pauli-Stra{\ss}e 27, 8093 Z{\"u}rich}
\affiliation{Group of Applied Physics, University of Geneva, 1211 Geneva, Switzerland.}
\author{Volkher B.~Scholz}\email{scholz@phys.ethz.ch}
\affiliation{ETH Z{\"u}rich, Institute for Theoretical Physics,
  Wolfgang-Pauli-Stra{\ss}e 27, 8093 Z{\"u}rich}

\title{A modular framework for randomness extraction based on
  Trevisan's construction}

\begin{abstract}
  Informally, an extractor delivers perfect randomness from a source
  that may be far away from the uniform distribution, yet contains
  some randomness. This task is a crucial ingredient of any attempt to
  produce perfectly random numbers---required, for instance, by
  cryptographic protocols, numerical simulations, or randomised
  computations. Trevisan's extractor raised considerable theoretical
  interest not only because of its data parsimony compared to other
  constructions, but particularly because it is secure against quantum
  adversaries, making it applicable to quantum key distribution.

  We discuss a modular, extensible and high-performance implementation
  of the construction based on various building blocks that can be
  flexibly combined to satisfy the requirements of a wide range of
  scenarios. Besides quantitatively analysing the properties of many
  combinations in practical settings, we improve previous theoretical
  proofs, and give explicit results for non-asymptotic cases. The
  self-contained description does not assume familiarity with
  extractors.
\end{abstract}
\maketitle


\section{Introduction}
Random numbers are key ingredients for many purposes concerning
communication or computation: secretly shared, perfectly random bit strings
enable two parties to communicate in private using a one-time pad,
without the possibility of a third party decrypting any of the messages
they exchange.  Stochastic algorithms used in numerical simulation or
machine learning also rely on
random numbers as part of their input. In all such applications, it is
usually best to have \emph{uniform} randomness available, that is, an
observer should not have prior knowledge about the distribution of numbers,
or, more specifically, the content of bit strings: Each string should be
equally probable from his point of view. Some applications, such as
encrypting a message with information\-/theoretic security, are even
impossible if the randomness used to choose the key is not equivalent
to a uniform distribution~\cite{BD07}. Unfortunately, despite their
usefulness and the need for them, uniformly distributed random
bits are almost impossible to generate in practice. On the
other hand, there are plenty of physical resources containing ``some''
randomness, for instance radioactive decay, thermal fluctuations, certain
measurements on photons, or many others.

This contrast motivates the study of \emph{randomness extractors}:
Functions that map longer, slightly random bit strings onto shorter,
perfectly random bit strings.  They convert an initial distribution of
random numbers (the \emph{source}) that satisfies certain assumptions on
``how random'' it is into an almost uniform distribution over the output
bit strings. As suggested by intuition, this is impossible in a completely
deterministic way~\cite{Shaltiel2002}, and extractors indeed require a
second source of randomness, the \emph{seed}, that is usually assumed to
be perfectly uniformly distributed.

The goal of this work is twofold: to implement a specific
randomness extractor devised by Trevisan in 1999~\cite{Trevisan2001} as a practical
companion to the abundant amount of theoretical literature on the subject, and
to provide an overview and guidance on the topic to experimentalists who need
to use extractors, but would not benefit from working through all fundamental
publications. Trevisan's construction has three particular
advantages: For one, it is secure in the presence of quantum side information,
as was shown by one of the authors in collaboration with
others~\cite{De2012}. This is especially important in the context of
cryptography, where an adversary usually has some prior information about the
initial distribution used as raw material to produce a secret key. With a
quantum-proof extractor, it is possible to eliminate all these undesired
correlations by turning the initial distribution into a uniform one \--- a task
referred to as \emph{privacy amplification}. With quantum key
distribution (QKD) systems gradually
transitioning from research labs into commercial applications, it is very
important to implement this crucial protocol step, and given a bound on how
random and how correlated with some (quantum-)memory a bit string is, the
algorithm can indeed perform the task of producing truly random and uncorrelated
bits with the help of a short seed of uniformly distributed bits.

Another crucial advantage of Trevisan's construction is that the required
seed length is only poly-logarithmic in the length of the input. This
greatly outperforms randomness extractors based on (almost) universal hashing, which
are currently most often used in quantum cryptographic
applications~\cite{Ren05,TSSR11}, but require a seed whose size unfortunately
scales with the length of the raw input (output) bits.

In addition, Trevisan's construction is a \emph{strong} extractor,
which means that the seed is almost independent of the final
output. This implies that randomness in the seed is not consumed in
the process (compared to \emph{weak} extractors) and can be used at a
later time \--- or, as in the case of privacy amplification, it can be
obtained by the adversary without compromising the security of the QKD scheme.

Despite the considerable theoretical attention the field of extractors has
received during the last decade, there is, to our knowledge, only a single
publication, Ref.~\cite{Ma}, that discusses a prototypical implementation
of Trevisan's construction. However, their work has some drawbacks:
Compared to Ref.~\cite{Ma}, our implementation offers greater flexibility
as the operator can combine various different building blocks that make up
the extractor, and so can specifically engineer an algorithm for his
needs. Comparing the performance, our implementation exceeds the throughput
of~\cite{Ma} by several orders of magnitude, and is for the first time able
to scale to data sets of realistic size (exceeding the maximal amount
considered in~\cite{Ma} by 10 orders of magnitude) for which the amount of
extracted randomness actually exceeds the size of the initial seed, which
marks the regime in which Trevisan's construction prevails over
two-universal hashing. Besides, the full source code of our implementation
is available\footnote{The sources are available under the terms of the GNU
General Public License (GPL), version~2 \--- see
\url{www.gnu.org}. Essentially, this means that the code can be used and
modified free of charge for research (or even commercial) work, provided
that any improvements to the code are made available under similar terms.}
and can be inspected and used as basis for further research.  We therefore
hope that our implementation will be of use for applications in the context
of quantum cryptography, for implementing random number generators, or as a
testbed for developing new ideas about extractors.


In Section~\ref{sec:overview}, we give more proper descriptions and 
definitions of the involved concepts and constructs. In particular, we
discuss the necessary notions of entropy and the distance of a
distribution from uniform (relative to an overserver). However, no
prior knowledge about randomness extractors is assumed. Section~\ref{sec:derivations}
contains the necessary technical details, and can be skipped upon
first reading. Section~\ref{sec:implementation} is devoted to the implementation: It
describes the software architecture and discusses some important technical
details, explains how to add new components, and gives concise algorithmic
descriptions of all components.  In Section~\ref{sec:runtime}, we present comprehensive
performance measurements, and discuss which combinations of primitives are
useful for which purpose.  The appendices collect formal definitions,
provide known extractor results with explicitly spelled out constants that
are, in contrast to many discussions that rely on asymptotic notations,
vital for an implementation, and give proofs for some new propositions
developed in the paper.

\section{Overview}
\label{sec:overview}

\subsection{What are extractors?}

There are numerous possibilities to produce random numbers, and many
of them rely on some random physical process, turning, for instance,
thermal fluctuations into random bit strings. The laws of physics
state that these processes produce distributions with a non-zero
entropy, and hence are somewhat random. But the uniform distribution
or maximal entropy case is most often not within reach: thermal
fluctuations, for instance, require infinite temperatures to produce
truly random bit strings. It is therefore necessary to have an
algorithm that extracts random numbers from some given initial
distribution satisfying a lower bound on its entropy, turning them
into uniformly distributed ones. By shrinking the bit string (\ie,
reducing the support of its distribution), we increase its randomness
until it achieves its maximum.  It is easy to see that such a task is
impossible for any deterministic routine~\cite{Shaltiel2002}. But
assuming that we have two distributions (seed and source) over bit
strings at our disposal, promised to be uncorrelated and fulfilling a
lower bound on their entropy, the task comes into reach. Such
algorithms are called \emph{randomness extractors}, and their general
structure is shown in Figure~\ref{fig:scenario}.  The additional
randomness is usually taken to be uniform, and is called the
\emph{seed}\footnote{For simplicity we also treat the case of a
  uniform seed in this work, but some variations of Trevisan's
  extractor still apply when the additional randomness just fulfills a
  lower bound on its entropy~\cite{De2012}, and so the methods and
  code that we have developed can also be adapted to this setting.}. A
natural aim is to seek algorithms that minimise the required size of
the seed, or in other words, the amount of additional
randomness. Extractors depend on several parameters, specifying
source, seed, and output. This section explains the different
parameters and how they are quantified, and discusses their
connection.  In the second part, we briefly outline Trevisan's
construction.

\begin{figure}[htb]
  \includegraphics[width=\linewidth]{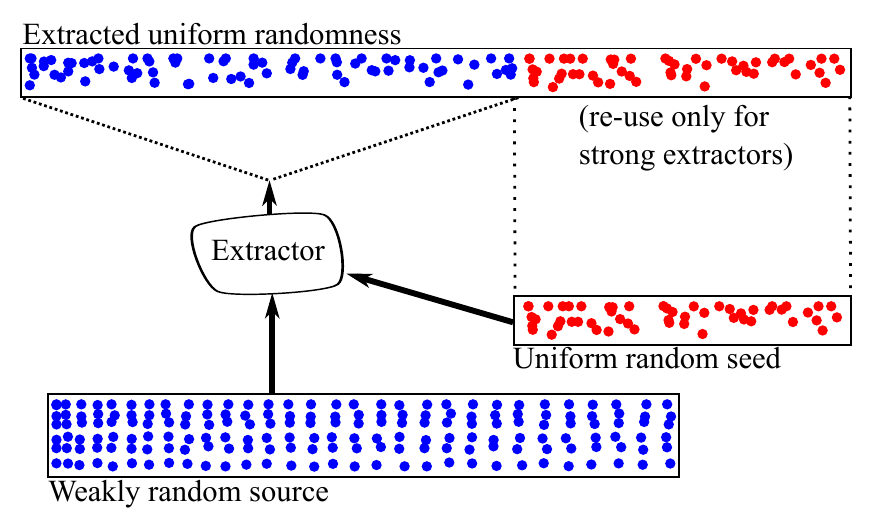}
 \caption{Scenario considered in randomness extraction: The output of
    a weakly random source, together with a short, uniformly
    distributed seed, is processed by a deterministic algorithm \---
    the \emph{extractor} \--- that produces a uniform sequence of
    randomness that is shorter than the initial input. For strong
    extractors, the initial seed is independent of the output and can
    be re-used as part of the result.}
  \label{fig:scenario}
\end{figure}

First, let us consider how to quantify the amount of randomness contained
in the source. As per the seminal works of Boltzmann, Shannon and von
Neumann, the amount of randomness contained in some distribution of numbers
is best quantified by its entropy, traditionally given as $\sum_x p_x \log
p_x$. Here, $x$ ranges over all output values, and $p_x$ is the probability
to observe outcome $x$. This notion of entropy originates from statistical
mechanics, where we deal with large numbers of independent entities that
are usually also identically distributed. In contrast to that, we are
interested in a \emph{single} run of our extractor, and \emph{not} in
statements about the output distribution obtained from many instances of
the extractor applied to many independent copies of the initial
distribution. Consequently, we have
to alter the notion of entropy.

Intuitively, the amount of randomness in some distribution is quantified by
the ability to predict the observed values. This leads to the definition of
the \emph{guessing probability} $p_{\text{guess}}(X)$ as the probability of
correctly guessing the value of the random variable $X$. It is given by
$p_{\text{guess}}(X) = \max_x p_x$ \--- the optimal strategy is to guess
the most probable value. The bigger $p_{\text{guess}}$, the
less random the source is. This is quantified by the \emph{min-entropy},
defined by $\Hmin{X} = - \log \max_x p_x$.

The definition does so far not consider the possible presence of side
information. In a more complex setting, there might be some side
information $E$ correlated to the source $X$, and the task becomes to
extract uniform randomness from $X$ that is independent of $E$. In a
cryptographic context, $E$ represents the adversary's information about
the source. Clearly, if $E$ is a one-to-one copy of $X$, this task is
impossible, even if $X$ is perfectly uniform. The notions of guessing
probability and entropy consequently need to be extended such that they
measure the randomness of the source \emph{conditioned on $E$}.  If the
side information is classical, then extractors proven sound in the
absence of side information can be used with only a small adjustment of the
parameters.\footnote{See Lemma~\ref{lem:classicalproof} for an exact
  statement.} However, this changes
dramatically if the observer is allowed to use the power of quantum
mechanics~\cite{Kempe}.

To guess the value of $X$, a player holding a quantum state in a
system $E$ may measure this system, and make a guess based on the
observed outcome. For every value $X=x$, his quantum memory is in some
conditional state $\rho_x$, and his task reduces to distinguishing the
different states $\rho_x$. Mathematically, such a measurement is
specified by a positive operator-valued measurement $\{E_x\}$. Thus,
the probability to correctly guess the value taken by $X$ is given by
$p_{\text{guess}}(X|E) = \sum_x p_x \tr \rho_x E_x$. The corresponding
entropic quantity, the conditional min-entropy~\cite{Ren05}, is given
by $\Hmin{X|E} = - \log \sum_x p_x \tr \rho_x E_x$, where we take
$\{E_x\}$ to be the optimal measurement.

Having specified the quantification of randomness, we need to define
what we mean by an ``almost uniform'' distribution over the output $Z
= \Ext(X,Y)$, where $X$ is the source and $Y$ the seed. Again
intuitively, we would like to assure that a player holding some side
information cannot do better than with a random guess, that is, the
probability that he guesses correctly should be close to
$\frac{1}{2^m}$ if the output is a bit string of length
$m$. Mathematically, this is expressed by requiring that the joint
state of the output and the side information $\rho_{ZE}$ is close to a
product state of a perfectly uniform output, $\tau$ \--- the fully
mixed state \--- and the side information $\rho_E$, that is, we want
$\rho_{ZE} \approx \tau \tensor \rho_E$. The distance\footnote{We use
  the trace distance to measure how close two states are, see
  Appendix~\ref{app:extractordefs} for an exact definition.} between
these states is usually denoted by $\eps$, and referred to as the
error of the extractor. Colloquially, an error of $\eps$ corresponds
to a probability of at most $\frac{1}{2^m} + \eps$ that the output can
be guessed correctly, and a probability of at most $\frac{1}{2} +
\eps$ that any single bit can be guessed.

We are now able to define extractors in more detail. We assume that the
input are bit strings of length $n$ and that the distribution has a
conditional min-entropy of at least $k$. For processing each input
string, $d$ randomly distributed bits may be used. The output should
consist of bit strings having length $m$, and the distribution of outcomes
should be $\eps$-close to uniform and independent from the side
information. We call a deterministic function taking as input the source
and the seed and achieving these goals a \emph{quantum-proof
  $(k,\eps)$-extractor}\footnote{See Definition~\ref{def:quantumextractor}
  for a formal definition.}. The \emph{output length} of such an extractor is
$m$. Naturally, we would like to have $m$ as close to $k$ as possible,
which means that most of the entropy has been extracted. The value $k-m$ is
therefore called the \emph{entropy loss}. The extractor is called
\emph{strong} if the output is also close to independent of the value of
the seed, or equivalently, the output of the extractor is a pair of bit
strings, the first being the value of the random bits used as seed, and the
second being the output. This is exactly the setting needed for the privacy
amplification step in quantum key distribution protocols, as both Alice and
Bob need to use the same value for the seed, in order to produce a
correlated bit string. It is thus assumed that the bit values for the seed
are uniformly distributed, but known to the adversary, since they are
publicly announced by one of the parties.

The most commonly used (at least in theoretical considerations) strong
extractor in quantum key distribution protocols is based on
two-universal hash functions~\cite{Ren05,TSSR11}. A \emph{family} is a
collection of functions that map longer bit strings to shorter bit
strings. Over a random choice of the function from the set,
two-universality requires that it is extremely unlikely for different
bit strings to be mapped to the same output. While universal hashing
is optimal in the entropy loss\footnote{An extractor will always have
  an entropy loss $\Delta \geq 2 \log 1/\eps + O(1)$, where $\eps$ is
  the error of the extractor~\cite{RT00}.}, the required seed length (the size of the
function family: as many bits as necessary to randomly select one
member) scales as a multiple of $n$, the input data length (or, in
the case of \emph{almost} universal hashing~\cite{TSSR11}, as a
multiple of $m$, the output length).

It is important to emphasise that strong extractors provide security
just based on an entropic assumption, namely the amount of
(conditional) min-entropy of the initial distribution. In contrast,
pseudo-random number generators are based on complexity theoretic
assumptions. For instance, the presumed existence of functions that
are hard to invert on average in polynomial time can be turned into an
algorithm taking a short random seed and producing an output
distribution that ``looks'' like the uniform distribution to any
algorithm running in polynomial time (see Ref.~\cite{HILL} for further
information and formal definitions). While such generators greatly
outperform our current implementation,\footnote{Practical
  implementations of pseudo-random number generators, among them the
  variant used in the Linux kernel~\cite{Mauerer2008}, rely on
  cryptographic hash functions like SHA-512~\cite{FIPS180}.  Since
  these functions, in turn, are used in numerous computing scenarios
  that extend well beyond cryptography, many recent CPUs offer
  special-purpose machine instructions that allow for particularly
  efficient implementations. This makes it practically impossible for
  an implementation of Trevisan's construction to beat the throughput
  of cryptographic hash algorithms that are, besides, much simpler
  from an algorithmic point of view.} they require much stronger
assumptions and give rise to weaker promises on the output
distribution.

After this general discussion on extractors and related issues, we now
describe Trevisan's construction in more detail.


\subsection{Trevisan's Construction}

Trevisan's seminal contribution originates in the insight that a certain
class of error-correcting codes (ECC), called \emph{list-decodable codes}~\cite{Vadhanreview},
can be re-interpreted as
extractors. In fact, the codes are \emph{one-bit} extractors, and deliver a
single perfectly random bit from a larger reservoir of slightly random
bits. Since an error correcting code is a deterministic mapping from
shorter into longer bit strings to make them more robust against the
influence of errors acting on the encoded data, the connection between ECCs
and bit extractors is not immediately obvious. Trevisan's first observation
was that if we randomly select a position of an ECC's output string, the
corresponding bit is uniformly distributed, provided that the initial
distribution has enough min-entropy. If the code outputs bit strings of
length $\bar{n} = \text{poly}(n)$, a logarithmic long seed of random bits
is needed, since exactly $\log \bar{n}$ bits are necessary to specify a position
of an $n$-bit string.

Of course, we are interested in much longer outputs than just a single bit.
The second observation of Trevisan states that outputs of many uses of the
one-bit extractor can be concatenated so that the output is still uniformly
distributed, and that we do not need to choose a completely new set of
random seed bits for every use of the one-bit extractor. This is achieved
using a construction of Nisan and Wigderson~\cite{NW}, the Nisan-Wigderson
pseudo-random generator. The basic idea is that the initial choice of random
bits taken from the seed is divided into sets of random bits with small
overlap. For example, 100 random bits are divided into 15 sets, each
consisting of 10 bits. If the overlap is not too large, there are not too
many correlations induced by seeding the elements of each set into the
one-bit extractors and concatenating the output bits. The randomness
available in the initial distribution can then be used to cope with these
additional correlations. Dividing the original seed bits into smaller sets
is done using an algorithm called \emph{weak design}. The complete process
is summarised in Figure~\ref{fig:trevisan_extractor}.

\begin{figure}[htb]
  \includegraphics[width=\linewidth]{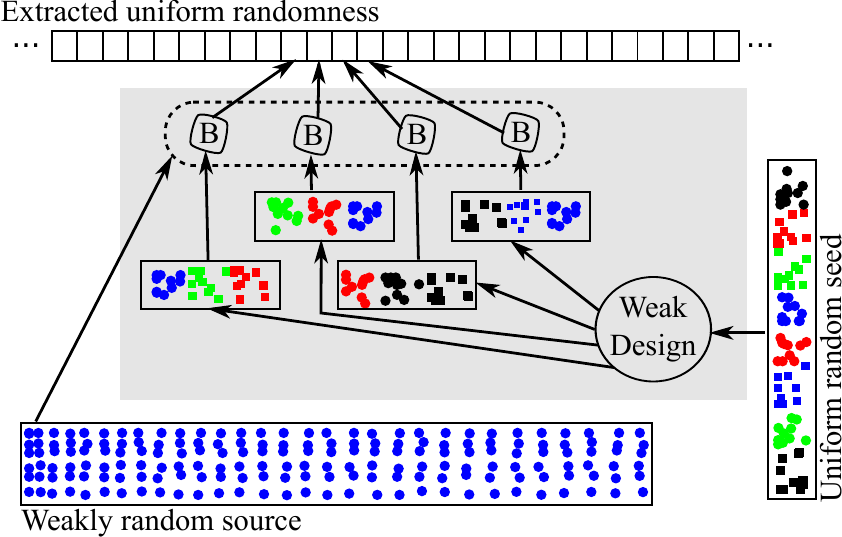}
  \caption{Interplay of components in Trevisan's extractor. The weak
    design expands the initial seed into multiple smaller packets with
    certain overlap properties whose cumulative length can
    considerably exceed the seed length. Each is fed into a 1-bit
    extractor \(B\), which distills a single random bit out of the
    global randomness for each packet. The bits are finally
    concatenated to form the extracted randomness. All components that
    make up Trevisan's algorithm are highlighted in a grey box.}
  \label{fig:trevisan_extractor}
\end{figure}

It turns out that there are many examples of one-bit extractors and
weak designs that fulfil the requirements needed for the above
procedure to work. Trevisan's construction is therefore not really a
single algorithm, but rather a recipe to combine different one-bit
extractors and weak designs to generate a quantum-proof strong
extractor. The exact choice of either building block (we also refer to
them as \emph{primitives} in the following) depends very much on the
application and on the parameter regime of interest. Consequently, we
decided to implement different possible choices and let the operator
decide which ones to use. We now present two exemplary use cases that do
especially highlight the need to prioritise between speed, entropy
loss, and the assumptions on the initial randomness.

Suppose first that we have at our disposal a fast source providing
very good random numbers, or equivalently, having a very high
entropy. Ideally, we would like to extract all randomness, but since
producing new random numbers is fast, we can allow a substantial
entropy loss, concentrating on performance instead. In this case, the
combination of the GF(\(p\))-weak design with the XOR one-bit
extractor is the best choice, achieving a throughput of about
17~kbit/s on a normal notebook machine and about 160~kbit/s on a large
workstation with 48~cores. The extractor can handle input lengths of
several GiBits, which is also necessary since in these extreme cases
only one percent of the available entropy is extracted. This means
that the source needs to provide random numbers with a rate of about
20~Mibit/s. The required amount of seed for the one-bit extractor is
1.7~KiBit, which leads to a total seed of roughly 2.9~MiBit for
4~GiBit of input data.

If we consider a source of very low entropy and focus on small
entropy loss rather than throughput, the optimal choice turns out to
be the block weak design together with polynomial hashing for the one-bit
extractor. It works for any lower bound on the entropy, has almost
minimal entropy loss, and requires the shortest seed of all
constructions. It is, however, much slower than the first combination:
A throughput of only a few kbit/s is achieved on a notebook computer,
or 70~kbit/s with 48~cores, albeit for a much shorter input length of
\(2^{16}\) bits: 100 bits are necessary for the one-bit extractor,
which results in 10~KiBit of total seed for the standard weak
design, and slightly less than 300~KiBit for the block weak design
needed to extract nearly all the entropy.

These are just two examples, and proper performance measurements as well as
a discussion on possible improvements and aspects of high-performance
computing can be found in section~\ref{sec:runtime}.


\section{Derivations}
\label{sec:derivations}

\subsection{Trevisan's extractor}
\label{sec:derivations.trevisan}

\subsubsection{Description}

As briefly sketched in the previous section, Trevisan's construction
consists in applying multiple times the same one-bit extractor to the
input string, using different weakly correlated seeds for each
run. The seeds are chosen as substrings of some longer seed $y \in
\{0,1\}^d$. Let $\{S_i\}_i$ be a family of sets such that for all $i$,
$|S_i| = t$ and $S_i \subset [d] = \{1,\dotsc,d\}$. Then $y_{S_i}$
\--- the string formed by the bits of $y$ at the positions given by
the elements $j \in S_i$ \--- is a string of length $t$. For a given
one-bit extractor $C : \{0,1\}^n \times \{0,1\}^t \to \{0,1\}$, and
such a family of sets $\{S_i\}_{i = 1}^m$, Trevisan's extractor is
defined as the concatenation of the output bits of $C$ when used with
the seeds $y_{S_i}$, namely
 \begin{equation} \label{eq:trevisansconstruction} \Ext(x,y) \coloneqq
   C(x,y_{S_1}) \dotsb C(x,y_{S_m}).\end{equation}

The performance of the extractor naturally depends on the performance
of the one-bit extractor, but also on the independence of the seeds
used for each run of the one-bit extractor. Intuitively, the smaller
the cardinality of the intersections of the sets
$\{S_i\}$, the more randomness we can extract form the source, but the
larger the seed. The exact condition is given in the following
definition.

\begin{deff}[weak design~\protect{\cite{Raz2002}\footnote{The second
      condition of the weak design was originally defined as $\sum_{j
        = 1}^{i-1} 2^{|S_j \cap S_i|} \leq r(m-1)$. We prefer to use
      the version of \cite{Hartman2003}, since it simplifies the notation
      without changing the design constructions.}}]
  \label{def:weakdesign} A family of sets $S_1,\dotsc,S_m \subset [d]$
  is a \emph{weak $(m,t,r,d)$-design} if
  \begin{enumerate}
    \item For all $i$, $|S_i| = t$.
    \item For all $i$,  $\sum_{j = 1}^{i-1} 2^{|S_j \cap S_i|} \leq
      rm$.
  \end{enumerate}  
\end{deff}
In the following, we refer to the parameter $r$ as the \emph{overlap}
of the weak design.

As an example, if we use a quantum-proof $(k,\eps)$-strong extractor
as one-bit extractor and a weak $(m,t,r,d)$-design, the construction
given by \eqref{eq:trevisansconstruction} is a quantum-proof
$(k+rm,m\eps)$-strong extractor (see Lemma~\ref{lem:implicit}). Thus,
if $r = 1$, Trevisan's extractor has roughly the same entropy loss as
the underlying one-bit extractor. Note also that the error $\eps$ of
the one-bit extractor is the error per bit for Trevisan's construction.

\subsubsection{Constructions overview}\label{sec:primitives_overview}

We always denote the input length by $n$, and the output length by
$m$. We choose $\eps$ in such a way that it corresponds to the error
per bit for the final construction. We use $d$ to describe the seed
length of Trevisan's extractor and $t$ for the seed length of the
underlying one-bit extractor. $r$ denotes the overlap of the weak
design, and $k$ the min-entropy required in the source, which is often
expressed as $k = \alpha n$.

In the following, we briefly summarize the constructions described in
Sections~\ref{sec:derivations.weakdesigns} and
\ref{sec:derivations.extractors}. We take the input length $n$, output
length $m$, and error per bit $\eps$ to be fixed, and calculate the
seed length $d$ and entropy needed in the source $k$ as functions of
these three parameters.

\paragraph{Weak designs:} In Section~\ref{sec:derivations.weakdesigns}
we describe two weak designs, the first was originally proposed by
Nisan and Wigderson~\cite{NW94}, and has parameters $d = t^2$ and $r =
2e$ for any prime power $t$ and any $m$. This means that the seed of
the final construction is the square of the seed of the one-bit
extractor, and the entropy loss induced by the weak design is
$(2e-1)m \approx 4.436 m$.  The second construction iterates the
first; it has a larger $d = a t^2$
for \begin{equation} \label{eq:overview.blocdesign} a = \left\lceil
    \frac{\log(m-2e) - \log(t-2e)}{\log 2e - \log(2e-1)}
  \right\rceil, \end{equation} but $r =1$, \ie,
the design does not cause any entropy loss.

\paragraph{XOR-code:} The XOR-code is a one-bit extractor, which
simply computes the XOR of a substring of the input. With the two
different weak designs, we find that the randomness and seed needed
are
\begin{align*}
  k & = \gamma n + rm + 6 \log \frac{1+\sqrt{2}}{\eps} + \log \frac{4}{3}, \\
  t & = \frac{2 \ln 2}{h^{-1}(\gamma)} \log n \log \frac{(2+\sqrt{2})}{\eps}, \\
  d & = t^2 \text{ or } at^2,
\end{align*}
where $\gamma$ is a free parameter that influences the amount of
extracted randomness and the length of the initial seed (details in
Section~\ref{sec:derivations.extractors.xor}), and $a$ is given by
Eq.~\eqref{eq:overview.blocdesign}. $h(p) = -p \log p - (1-p) \log
(1-p)$ is the binary entropy function, and its inverse is defined on
the interval $(0,1/2)$.

\paragraph{Lu's construction:} This one-bit extractor selects a random
substring of the input by performing a walk on an expander graph, and
then hashes the result to one bit by taking the parity of the bitwise
product with a random string. With the two different weak designs, we
find that the randomness and seed needed are
\begin{align*}
  k & = h(\nu)n + rm + 6 \log \frac{2+\sqrt{2}}{\eps} - 2, \\
  t & = \log n + 3 c(\ell-1) + \ell, \\
  c & = \left\lceil \frac{\log w}{2 \log 5\sqrt{2}/8} \right\rceil, \\
  \ell & = \left\lceil \frac{8 \log \eps - 8 \log (2+\sqrt{2})}{\log
    (1-\nu+w)} \right\rceil, \\
  d & = t^2 \text{ or } at^2,
\end{align*}
where $\nu \leq 1/2$ is a free parameter, $a$ is given by
Eq.~\eqref{eq:overview.blocdesign}, and $w$ is the solution to the
equation\footnote{$w < \nu$ can actually be chosen freely. The above
  value minimizes the walks on the expander graph.} \[w \log w =
(1-\nu+w)\log (1-\nu+w). \]

\paragraph{Polynomial hashing:} This constructions uses almost
universal hash functions. With the two different weak designs, we
find that the randomness and seed needed are 
\begin{align*}
  k & = rm + 4 \log \frac{1}{\eps}+6, \\
  t & = 2 \left\lceil \log n + 2 \log \frac{2}{\eps} \right\rceil,\\
  d & = t^2 \text{ or } at^2,
\end{align*}
where $a$ is given by \eqref{eq:overview.blocdesign}.

\subsection{Weak designs}
\label{sec:derivations.weakdesigns}

The weak design construction we use (see
Section~\ref{sec:derivations.weakdesigns.basic} for a description) is
originally from Nisan and Wigderson~\cite{NW94}, who proved that it is
a standard design \--- a notion stronger than \emph{weak designs},
originally used by Trevisan~\cite{Trevisan2001}, but which Raz et
al.~\cite{Raz2002} showed to be unnecessary. Hartman and
Raz~\cite{Hartman2003} proved that this construction is a weak
$(m,t,r,d)$-design with overlap $r = e^2$ for a prime $t$, $d=t^2$,
and $m$ a power of $t$. Ma and Tan~\cite{MT11} improved Hartman and
Raz's analysis, and showed that $r = e$ for any prime power $t$ and
any $m$ which is a multiple of a power of $t$. However, for a
practical implementation, we need a construction that is valid for any
$m$. We prove in Appendix~\ref{app:weakdesignproofs.basic} that this
construction is a weak $(m,t,r,d)$-design for any prime power $t$, any
$m$, and $r = 2e$.\footnote{Hartman and Raz~\cite[Corollary
  2]{Hartman2003} show that there exist a $d = O(t^2)$ and $r = O(1)$
  such that for any $m > t^{\log t}$ the construction is a
  $(m,t,r,d)$-design, however the restriction $m > t^{\log t}$ and
  constants in the $O$-notation which depend on $m$ make this unusable
  in practice. Ma and Tan~\cite{MT11} conjecture that the basic
  construction is a weak $(m,t,e,t^2)$-design for any $m$, and use
  this in their implementation. To make up for the lack of proof, they
  simply count the intersections between the sets $S_i$ after
  generating the design, to make sure that the overlap is bounded by
  $e$.}

As mentioned in Section~\ref{sec:derivations.trevisan}, a larger
overlap leads to a larger entropy loss. In
Section~\ref{sec:derivations.weakdesigns.reducing} we adapt an
iterative construction of the basic design from Ma and
Tan~\cite{MT11}, to construct a new design with $r = 1$. We prove in
Appendix~\ref{app:weakdesignproofs.reducing} that this construction is
correct.

\subsubsection{Basic construction}
\label{sec:derivations.weakdesigns.basic}

In this section we describe a weak design construction, that is, we
define a family of sets that satisfy the conditions of
Definition~\ref{def:weakdesign}.

This construction makes use of polynomials over a finite field
$\GF{t}$. Every set $S_p$ is indexed by one such polynomial $p :
\GF{t} \to \GF{t}$. To construct a weak $(m,t,r,d)$-design we need $m$
sets, and therefore $m$ such polynomials, which we take in increasing
order of their coefficients. For example, if $m = 6$ and $t = 2$, the
polynomials are $\sum_{i=0}^2 \alpha_i x^i$, with the coefficients
\((\alpha_{2}, \alpha_{1}, \alpha_{0})\) taken in the following order:
\((0,0,0)\), \((0,0,1)\), \((0,1,0)\), \((0,1,1)\), \((1,0,0)\),
\((1,0,1)\).  In general, the $n$th polynomial is given by $p(x) =
\sum_{i = 0}^{c} \alpha_i x^i$, with $\alpha_i = (n-1)/t^i \mod t$ and
$c = \left\lceil \frac{\log m}{\log t} - 1\right\rceil$.

The elements of the set $S_p$ are all the pairs of values $S_p
\coloneqq \{(x,p(x)) : x \in \GF{t}\}$. Each set thus has $|S_p| = t$
elements, and $S_p \subset [d]$ holds for $d = t^2$, where we
map $[d]$ to $[t] \times [t]$ in the obvious way. We prove in
Appendix~\ref{app:weakdesignproofs.basic} that for all $m$ and $p$,
this construction has $\sum_{\{q < p\}} 2^{|S_p \cap S_q|} \leq 2em$,
where by $\{q < p\}$ we mean the set of all polynomials that come
before $p$.

\subsubsection{Reducing the overlap}
\label{sec:derivations.weakdesigns.reducing}

Note that any weak design can be viewed as a binary $(m \times
d)$-matrix $W$, where the value $w_{ij} = 1$ if $j \in S_i$. To
construct a weak design with $r = 1$, we will use the construction
from Section~\ref{sec:derivations.weakdesigns.basic} repeatedly with
different values $m_j$ (but the same $t$), obtaining designs
$W_{\text{B}, 0},\dotsc,W_{\text{B}, \ell}$. We then construct a new
design $W$ by placing these in its diagonal, that is, \[ W
= \left( \begin{array}{ccc} W_{\text{B}, 0} & & \\
    & \ddots & \\
    & & W_{\text{B}, \ell} \end{array} \right).\]

Let $m$ and $t$ be fixed, and let $r' = 2e$ be the parameter from the
basic construction. The number of calls to the basic construction is
given by 
\begin{equation} \label{eq:r.ell}
  \ell \coloneqq \max \left\{1,\left\lceil \frac{\log(m-r') -
        \log(t-r')}{\log r' - \log(r'-1)} \right\rceil\right\}.
\end{equation}
And each design $W_{\text{B}, i}$ is constructed with $m_i$ sets,
defined as follows:
\begin{equation} \label{eq:r.m} \begin{split}
  n_i & \coloneqq \left(1 - \frac{1}{r'}\right)^{i} \left(\frac{m}{r'} - 1\right) \qquad \text{for
    $0 \leq i \leq \ell -1$}, \\
  m_i & \coloneqq \left\lceil\sum_{j = 0}^{i} n_j \right\rceil
  - \sum_{j = 0}^{i - 1} m_j \qquad \text{for $0 \leq i \leq \ell -1$}, \\
  m_\ell & \coloneqq m - \sum_{j = 0}^{\ell - 1} m_j.
\end{split} \end{equation}

The weak design $W$ thus has $d = (\ell+1)t^2$. In
Appendix~\ref{app:weakdesignproofs.reducing} we prove that this
construction has $r = 1$.

Figure~\ref{fig:blockdes} discusses the parameter behaviour of the
block weak design.

\begin{figure}[htb]
 \centering\includegraphics[width=\linewidth]{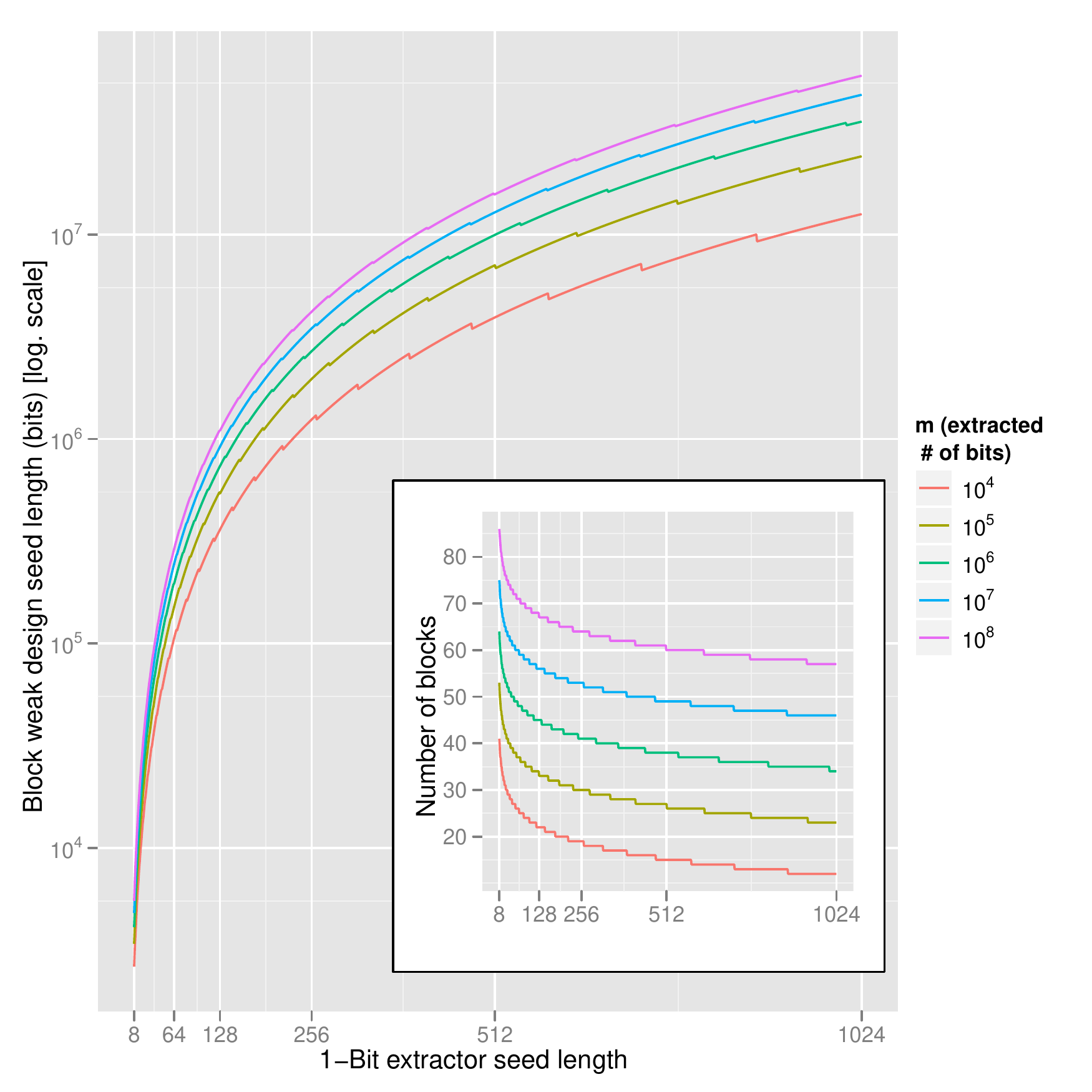}
 \caption{Total seed length required for the weak block design (based
   on an elementary design with \(r=2e\)) depending on the seed length
   of the 1-bit extractor. The inset shows the number of blocks --
   since the seed size for the block design depends linearly on the
   block number, this is also the seed overhead associated with the
   use of a block design. For typical parameters, the amount of seed
   grows by a factor of \(50\).}\label{fig:blockdes}
\end{figure}

\subsection{One-bit extractors}
\label{sec:derivations.extractors}

\subsubsection{XOR-code}\label{sec:derivations.extractors.xor}

This extractor computes the XOR for $\ell$ random positions of the
input, it is thus an $\ell$-local extractor (see
Appendix~\ref{app:extractordefs} for a precise definition). This
construction is efficient to compute, but requires a seed of length $t
\in O(\log n \log 1/\eps))$, where $n$ is the input length and $\eps$
the error of the construction, instead of the optimal $O(\log n + \log
1/\eps)$. It also has an entropy loss linear in the input length.

\begin{lem}[XOR-code \protect{\cite[Theorem
    41]{IJK09}}\footnote{\cite[Theorem 41]{IJK09} actually proves that
    this construction is a $\delta$-approximately
    $(\eps,L)$-list-decodable code. But such a code is an
    $(\eps,L2^{h(\delta)n})$-list-decodable code, which in turn is a
    classical\-/proof extractor by
    Lemma~\ref{lem:codesRextractors.bis}.}]
  \label{lem:localextractor}
  For any $\eps > 0$, $n \in \naturals$ and $\ell \in [n]$, the
  function
  \begin{align*}
    C_{n,\eps,\ell} : \ & \{0,1\}^n  \times [n]^\ell \to \{0,1\} \\
     & (x,i_1,\dots,i_\ell) \mapsto \bigoplus_{j = 1}^\ell x_{i_j}
  \end{align*}
  is a classical\-/proof $\ell$-local $(k,\eps)$-strong extractor with
  $k = h\left(\frac{\ln 2}{\ell} \log \frac{2}{\eps} \right) n + 3
  \log \frac{1}{\eps} +\log \frac{4}{3}$ and seed length $t = \ell
  \log n$, where $h(p) = -p \log p - (1-p) \log (1-p)$ is the binary
  entropy function.
\end{lem}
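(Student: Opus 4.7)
The proof strategy is dictated by the footnote: I would chain three ingredients, namely (i) [IJK09, Theorem~41] establishing approximate list-decodability of the XOR code, (ii) the standard Hamming-ball inflation argument converting an approximately list-decodable code into an exactly list-decodable one, and (iii) Lemma~\ref{lem:codesRextractors.bis}, which asserts the equivalence between exact list-decodable codes and classical-proof strong extractors with a quantitative translation of parameters.

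In the first step I would invoke [IJK09, Theorem~41] to deduce that $C_{n,\eps,\ell}$ is a $\delta$-approximately $(\eps',L)$-list-decodable code, where the key parameter is $\delta = (\ln 2/\ell)\log(2/\eps)$, with $L$ and $\eps'$ matching the IJK analysis up to the small constants that will later materialize as $\log(4/3)$ in the target. The seed length $t = \ell \log n$ is immediate from the definition of $C$, since the seed just encodes $\ell$ indices into $[n]$.

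In the second step I would convert approximate list-decodability into exact list-decodability by enlarging the list: every point within Hamming distance $\delta n$ of a list element is added. Because the relative Hamming ball of radius $\delta$ contains at most $2^{h(\delta) n}$ points, the new list has size at most $L' = L \cdot 2^{h(\delta) n}$, and this inflated list exactly covers the set of ``bad'' codewords, making the code $(\eps,L')$-list-decodable.

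In the third step I would apply Lemma~\ref{lem:codesRextractors.bis}, which converts an $(\eps,L')$-list-decodable code into a classical-proof $(k,\eps)$-strong extractor for $k$ of the form $\log L' + 2\log(1/\eps)$ plus a small additive constant. Substituting the inflated list size yields
\[
  k \;=\; h(\delta)\, n \;+\; \log L \;+\; 2\log\frac{1}{\eps} \;+\; C,
\]
and plugging in $\delta = (\ln 2/\ell)\log(2/\eps)$ together with $L$ as given by IJK reproduces the stated bound $k = h\!\left(\frac{\ln 2}{\ell}\log\frac{2}{\eps}\right) n + 3 \log(1/\eps) + \log(4/3)$. The main obstacle is purely bookkeeping: the constant $C$, the IJK list size, and the additive terms from the code-to-extractor conversion must line up precisely to produce the explicit $3\log(1/\eps) + \log(4/3)$ rather than an abstract $O(\log(1/\eps))$ slack, so careful tracking of constants through each of the three reductions is required.
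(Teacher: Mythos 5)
Your proposal follows exactly the route the paper itself takes: the lemma is imported from [IJK09, Theorem~41] via the chain sketched in the footnote, namely approximate list-decodability of the XOR code, inflation of each list element by a relative-radius-$\delta$ Hamming ball (costing a factor $2^{h(\delta)n}$ in list size), and then the code-to-extractor conversion of Lemma~\ref{lem:codesRextractors.bis} with the error doubling $\eps \mapsto 2\eps$ absorbed into the choice of the code's error parameter. The only caveat is the one you already flag yourself — the exact split of the $3\log(1/\eps)+\log(4/3)$ slack between $\log L$ and the $\log\frac{1}{2\eps}$ term of the conversion must be checked against the precise constants in [IJK09] — but the structure and all three reductions are correct and identical to the paper's argument.
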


By Lemma~\ref{lem:1-bit-against-Q.bis}, this construction is a
quantum\-/proof $(k,(1+\sqrt{2})\sqrt{\eps})$-strong extractor. And by
Lemma~\ref{lem:implicit}, if we use this in Trevisan's construction, the
final extractor is a quantum\-/proof $(k+rm,m
(1+\sqrt{2})\sqrt{\eps})$-strong extractor.

Let our source have min-entropy $\Hmin{X|E} = \alpha n$.  We want the
entropy loss induced by this one-bit extractor to be roughly $\gamma
n$, and need to find the appropriate \(\ell\) for the desired
value of \(\gamma\) since $\gamma(\ell, \eps) =
h\left(\frac{\ln 2}{\ell} \log \frac{2}{\eps} \right)$ for some
$\gamma < \alpha$. Solving for \(\ell\), we find $\ell(\gamma, \eps) =
\frac{\ln 2}{h^{-1}(\gamma)} \log \frac{2}{\eps}$. This implies that
\(\gamma\) directly influences the length of the seed, which we
discuss below. Since the inverse binary logarithm \(h^{-1}(\cdot)\)
is not analytically available, we need to resort to numerical techniques to
determine the appropriate value of \(\ell\) for a given \(\gamma\).
It is convenient to distinguish the experimental
entropy deficiency \(\alpha\) from the loss induced by the extraction
procedure by introducing a parameter \(\mu\) such that \(\gamma =
\mu\alpha\).

For $\eps = \frac{\eps'^2}{(1+\sqrt{2})^2}$, Trevisan's construction
is a quantum $(k,\eps'm)$-strong extractor with $k = \gamma n + rm + 6
\log \frac{(1+\sqrt{2})}{\eps'} +\log \frac{4}{3}$. The seed of the
one-bit extractor has length $t = \ell \log n = \frac{2 \ln
  2}{h^{-1}(\gamma)} \log n \log \frac{(2+\sqrt{2})}{\eps'}$, and the
seed of the complete construction has length $d$.

Especially the choice of \(\mu\) influences the behaviour of the
XOR extractor. Figures~\ref{fig:xor_params1}, \ref{fig:xor_params2},
and~\ref{fig:xor_params3} depict and discuss the effect of the various
chosen and inferred parameters.

\begin{figure}[htb]
 \centering\includegraphics[width=\linewidth]{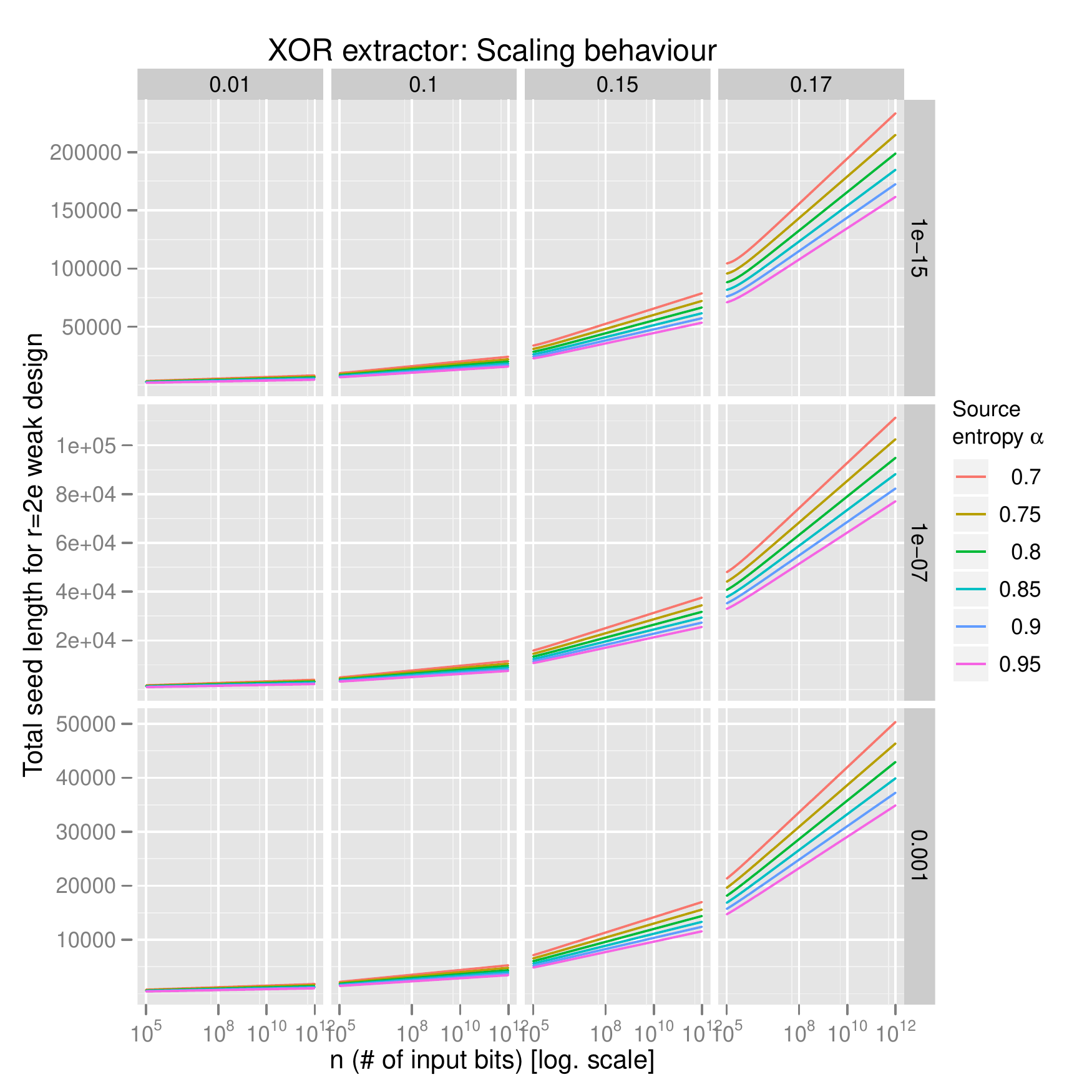}
 \caption{XOR parameter behaviour overview. \(\mu\) varies per column,
   \(\eps\) per row. The influence of the initial source entropy
   \(\alpha\) is mostly negligible, especially for small values of the
   extraction ratio \(\mu\). However, there is a drastic increase in
   seed size for \(\mu > 0.16\), which restricts the XOR method to a
   practical upper bound of the extraction ratio of about 10\% of
   the source entropy.}\label{fig:xor_params1}
\end{figure}

\begin{figure}[htb]
\centering\includegraphics[width=\linewidth]{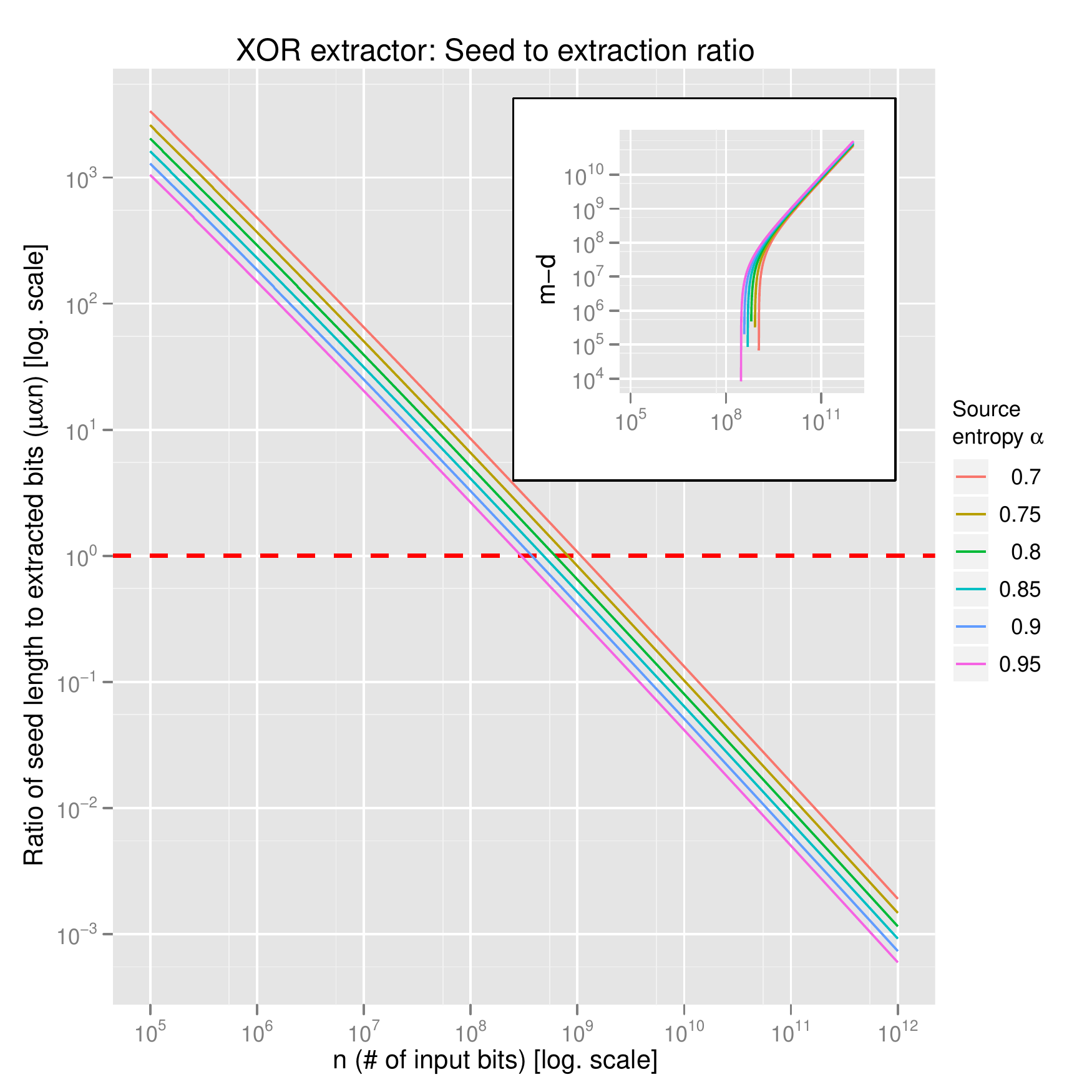}
\caption{Ratio between output size and seed length for various input
  sizes. The parameters \(\mu = 0.05\) and \(\eps=10^{-7}\) are fixed
  for this computation. A ratio of 1, indicated by a dashed red line,
  denotes the parameter regime where the amounts of extracted bits and
  the seed spent coincide; for values exceeding this threshold, the
  particular advantages of Trevisan's construction over two-universal
  hashing prevail because the ratio is better than what can be
  achieved with the latter approach. Recall, though, that the seed
  acts as a catalyst that can be included in the final randomness
  since the extracted bits are independent of the seed.\newline
  The initial source entropy \(\alpha\) accounts for a variation of
  about one order of magnitude of the extraction threshold. As a rule
  of thumb, the break-even point is at input sizes of roughly of
  \(10^{9}\) bits, which amounts to approximately \(2^{30}\) bytes
  (roughly 1~GiB) of data.\newline
  The inset shows the number of extracted bits less the seed
  spent.}\label{fig:xor_params2}
\end{figure}

\begin{figure}[htb]
\centering\includegraphics[width=\linewidth]{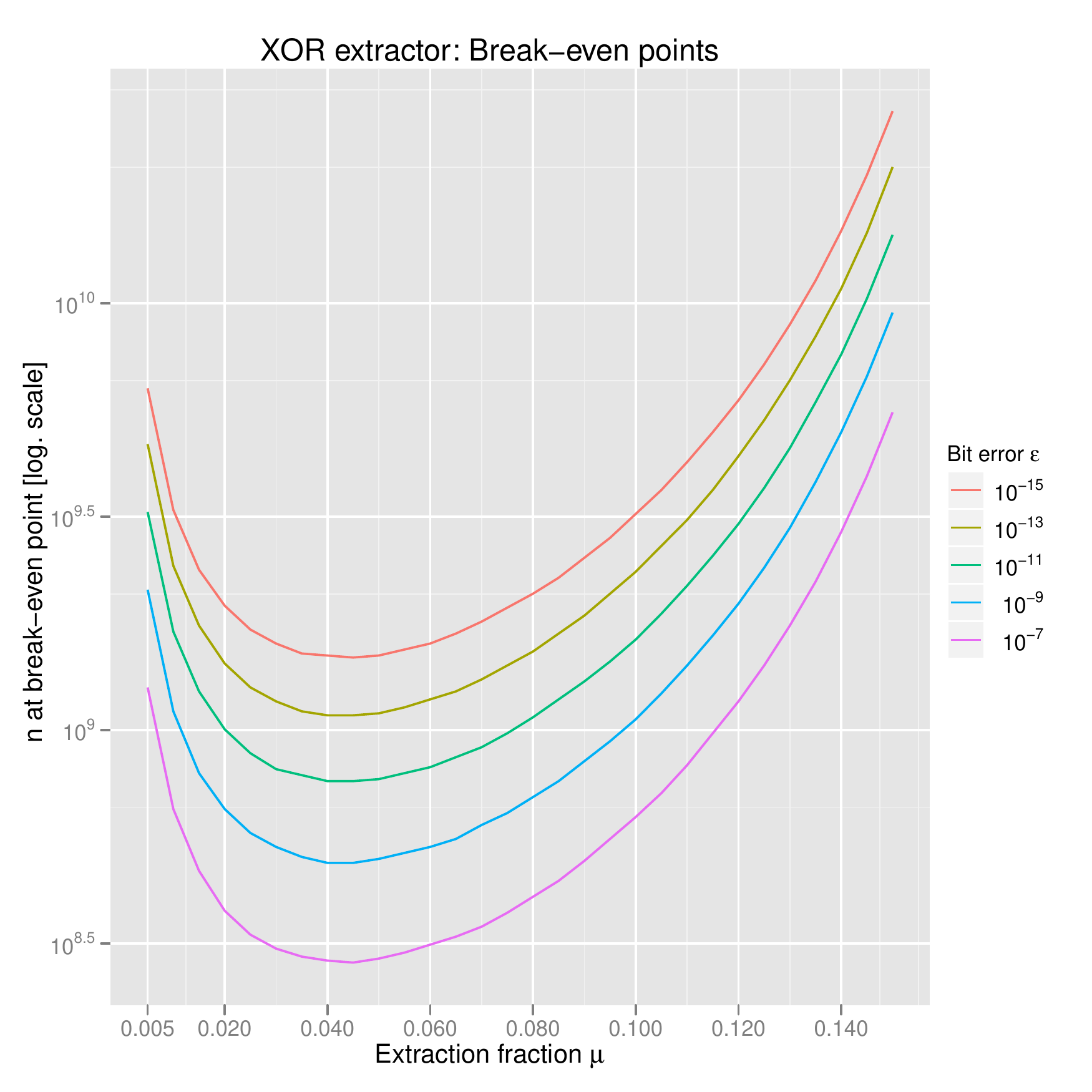}
\caption{Break-even points (\ie, minimal input length for which the
  amount of extracted randomness exceeds the required seed size) for
  varying values of \(\mu\) and \(\varepsilon\). The parameter
  \(\alpha\) is fixed to 0.8. As a rule of thumb, \(\mu = 0.05\) is
  close to the optimal value irregardless of the error parameter
  \(\eps\)).}\label{fig:xor_params3}
\end{figure}

\subsubsection{Lu's construction}\label{sec:lu_expander}

Lu~\cite{Lu2004} shows how to construct a local one-bit extractor,
\ie, an extractor for which each bit of the output only depends on a
subset of the input bits. He then uses his one-bit extractor in Trevisan's
construction. Here, we adapt the parameters of his construction to build a
quantum\-/proof extractor.

Lu's extractor proceeds in two steps. The first consists in selecting
a substring of the input; the second hashes this string to one
bit.\footnote{This type of construction is sometimes referred to as
  \emph{sample-then-extract}~\cite{Vad04}, although Lu~\cite{Lu2004}
  simply describes it as a local list-decodable code.} To select the
substring of the input, he performs a random walk on a $g$-regular
graph \--- a graph in which every vertex is connected to exactly $g$
other vertices.

Recall that a graph \(G\) is uniquely identified by its vertices and
edges, and is consequently specified by \(G = (V, E)\), where \(V\) is
the vertex set and \(E\) the edge set. An alternative representation
of more importance in our context is the \emph{adjacency matrix}.  For
a graph with \(n\) vertices, this is an \(n\times n\) matrix in which
the entry \(a_{ij}\) denotes the number of edges from vertex \(i\) to
vertex \(j\). The diagonal is typically filled with ones; since the
graphs considered here are undirected (\ie, the direction of edges is
not taken into account, only the fact that two vertices are
connected), the adjacency matrix is symmetric.

The eigenvalues of the adjacency matrix are referred to as
\emph{eigenvalues of the graph}.  For our purpose, the ratio between
the second largest and largest eigenvalue plays an important role, and
is labelled as \(\lambda\). Graphs with a small $\lambda$ are called
\emph{expander} graphs, and are common objects in pseudo-randomness
generation, see Ref.~\cite{Goldreich2011} for a review.

For an input string of length $n$, we choose a graph with $n$
vertices, so that each vertex corresponds to a bit position of the
string. Let $(v_1,\dotsc,v_\ell)$ be the vertices visited during a
walk of $\ell$ steps. We select the $\ell$ corresponding bits of the
input $x$, that is, $(x_{v_1},\dotsc,x_{v_\ell})$, and then hash it by
computing the parity of the bitwise product of this string with a random
seed $\beta \in \{0,1\}^\ell$.\footnote{This hash function is also
  used in Section~\ref{sec:derivations.extractors.hashing}.} The
output is thus $z = \bigoplus_{i = 1}^\ell \beta_i x_{v_i}$.

Lu~\cite{Lu2004} proves that the concatenation of the output bits $z$
for all possible seeds is a $(\delta,L)$-list decodable code with
\begin{equation}\label{eq:lucode1}
  L = \frac{2^{h(\nu)n}}{2\delta^{2}},
\end{equation}
for a $\nu \leq 1/2$ given by
\begin{equation}\label{eq:lucode2}
  \nu = 1 + \lambda^2 - \delta^{\frac{4}{\ell}}.
\end{equation}

Since $\delta^{\frac{4}{t}} < 1$, \eqref{eq:lucode2} can only be
satisfied if $\lambda^2 < \nu$. This can be obtained by taking as
expander graph $G$ a given construction $G_0$ to the power $c$. $G$ is
defined as the graph with adjacency matrix $A = A^c_0$, where $A_0$ is
the adjacency matrix of $G_0$. We then have $\lambda = \lambda_0^c$. A
random walk of length $\ell$ on $G_0^c$ is equivalent to a random walk
of length $\ell c$ on $G_{0}$, in which only the first of every $c$ steps
is remembered, and the others deleted~\cite{Hoory2006}.

To construct the regular expander graph $G_0$, we employ an algorithm
reviewed in Ref.~\cite{Goldreich2011}.  Let us only summarise the
essential facts here:

\begin{itemize}
\item The construction is restricted to degree \(g=8\), and the ratio
  between the second-largest and largest
  eigenvalue can be shown to be \(\lambda = 5\sqrt{2}/8 \approx 0.884\).
\item It is possible to compute the graph for all dimensions (\ie,
  number of nodes) that can be expressed as \(\ell^{2}\) for \(\ell
  \in \mathbbm{N}\). This restriction is much more relaxed than for
  other constructions, and does not pose any problems in real
  applications. Formally, the vertex set of the graph is defined on
  \(\mathbbm{Z}_{\ell}\times \mathbbm{Z}_{\ell}\). Each vertex
  \(\vertex{x}{y} \in \mathbbm{Z}_{\ell}\times \mathbbm{Z}_{\ell}\) is
  connected to the vertices \(\vertex{x \pm 2y}{y}\), \(\vertex{x \pm
    (y+1)}{y}\), \(\vertex{x}{y\pm 2x}\), and \(\vertex{x}{y \pm
    (2x+1)}\), which uniquely defines the edges. Notice that the
  arithmetic must be performed modulo \(\ell\), so the computationally
  (comparatively) cheap additions and multiplications are
  unfortunately accompanied by an expensive modulo
  division.\footnote{An obvious optimisation possibility that is
    available because the multiplicative factor 2 is small is to
    compute the modulo division not unconditionally, but only when the
    intermediate result really exceeds \(\ell\).}
\item The complete graph does not need to be computed in advance, but can be
  constructed during the random walk, and using a constant amount of
  space. 
\end{itemize}

For a given $\nu$, we 
choose $c$ and $\ell$ which minimize the number
of steps $c\ell$. By setting $w \coloneqq \lambda_0^{2c}$ and taking the
derivative of $c\ell$ with respect to $w$, we find that minimum is
obtained for the $w$ which is the solution of the equation \[w \log w
= (1-\nu+w)\log (1-\nu+w).\] The number of steps on the expander graph
is then given by $c = \left\lceil \frac{\log w}{2 \log \lambda_0}
\right\rceil$ and $\ell = \left\lceil \frac{4 \log \delta}{\log
    (1-\nu+w)} \right\rceil$.

The walk on the $g$-regular graph requires $\nb{n}$ bits of seed to
choose the first vertex, and $c(\ell-1) \log g$ bits for the direction of
the walk for each following step. The final hashing uses $\ell$ bits
of seed, for a total of $t = \nb{n} + c(\ell-1) \log g + \ell$.

From Lemma~\ref{lem:codesRextractors.bis} and \eqref{eq:lucode1}, Lu's
one-bit extractor is a classical-proof $(h(\nu)n + 3 \log
\frac{1}{\delta} - 2,2\delta)$-strong extractor. By
Lemma~\ref{lem:1-bit-against-Q.bis} it is quantum-proof $(h(\nu)n + 3
\log \frac{1}{\delta} - 2,(2+\sqrt{2})\sqrt{\delta})$. And from
Lemma~\ref{lem:implicit}, when used with a weak $(m,t,r,d)$-design,
Trevisan's construction is a quantum-proof $(k,m\eps)$-strong extractor
with $k = h(\nu)n + rm + 6 \log \frac{2+\sqrt{2}}{\eps} - 2$.

Unfortunately, Lu's construction is not useful in a practical setting
owing to its unfortunate parameter scaling: The number of random walk
steps increases considerably with decreasing parameter \(\nu\), see
Figure~\ref{fig:lu_steps}. However, as Figure~\ref{fig:lu_param_nu}
shows, small values of \(\nu\) are required for even tiny extraction
fractions. Overall, this makes the construction reach parameter realms
where it is preferable over two-universal hashing functions (namely,
when the length of the extracted bits exceeds the amount of initial
seed) only rarely, as Figure~\ref{fig:lu_seed_ratio} shows.

\begin{figure}[htb]
 \includegraphics[width=\linewidth]{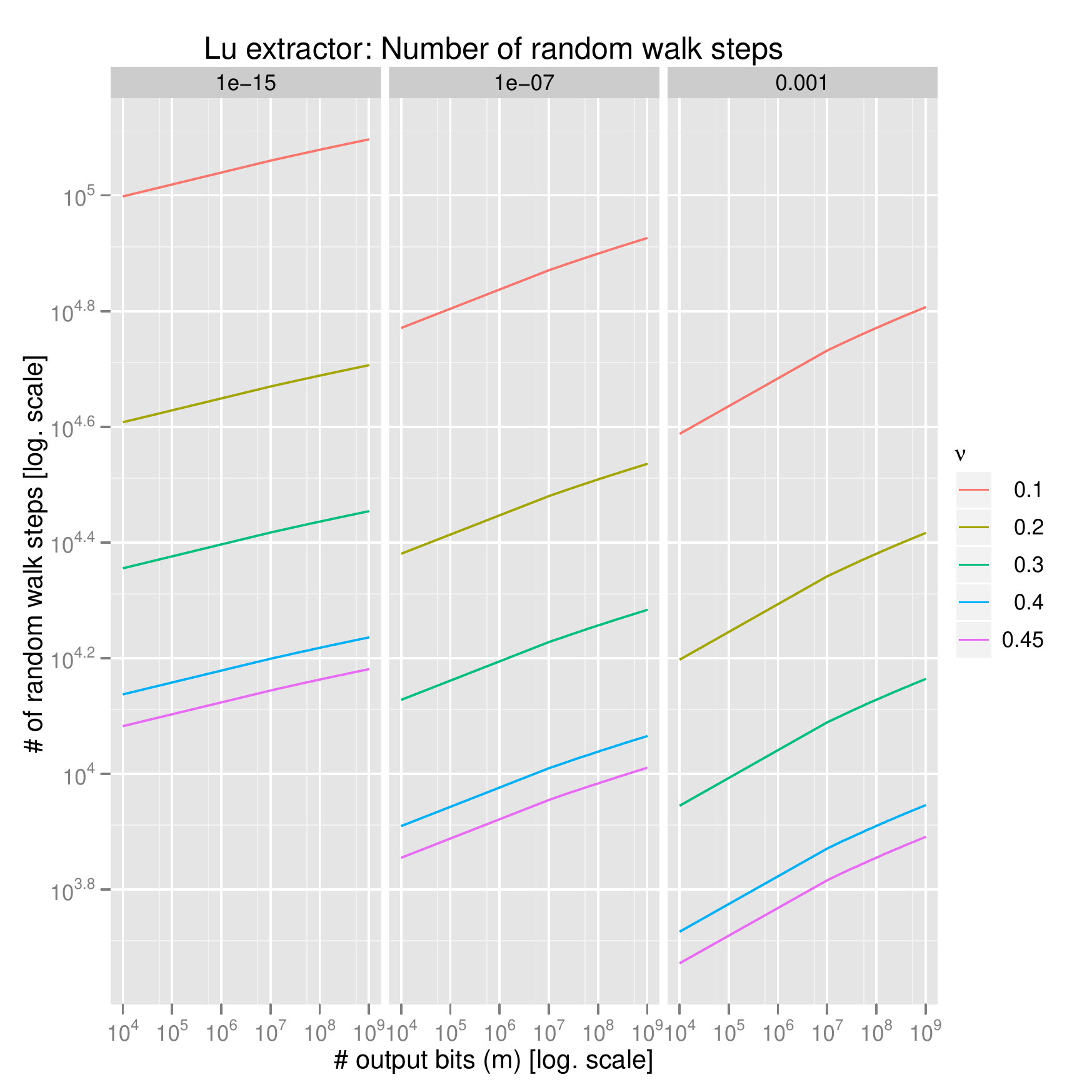}
 \caption{Number of random walk steps required in Lu's construction in
   dependent on the output length and the parameter
   \(\nu\).}\label{fig:lu_steps}
\end{figure}

\begin{figure}[htb]
 \includegraphics[width=\linewidth]{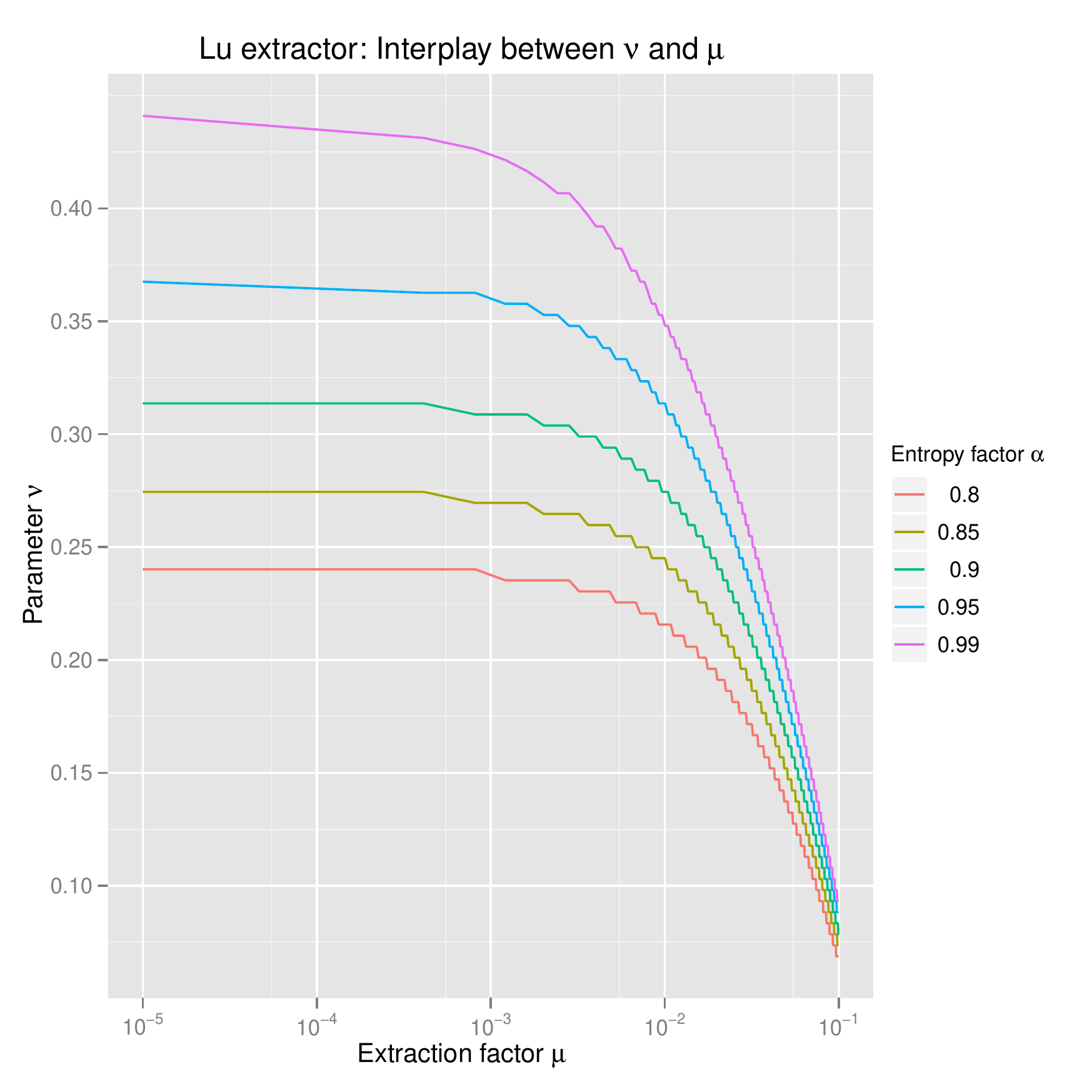}
 \caption{Dependency between parameters \(\nu\) and \(\mu\) for Lu's
   construction. The largest value of \(\nu\) that satisfies the
   boundary conditions on the available entropy given in
   Section~\ref{sec:lu_expander} was determined
   numerically.}\label{fig:lu_param_nu}
\end{figure}

\begin{figure}[htb]
  \includegraphics[width=\linewidth]{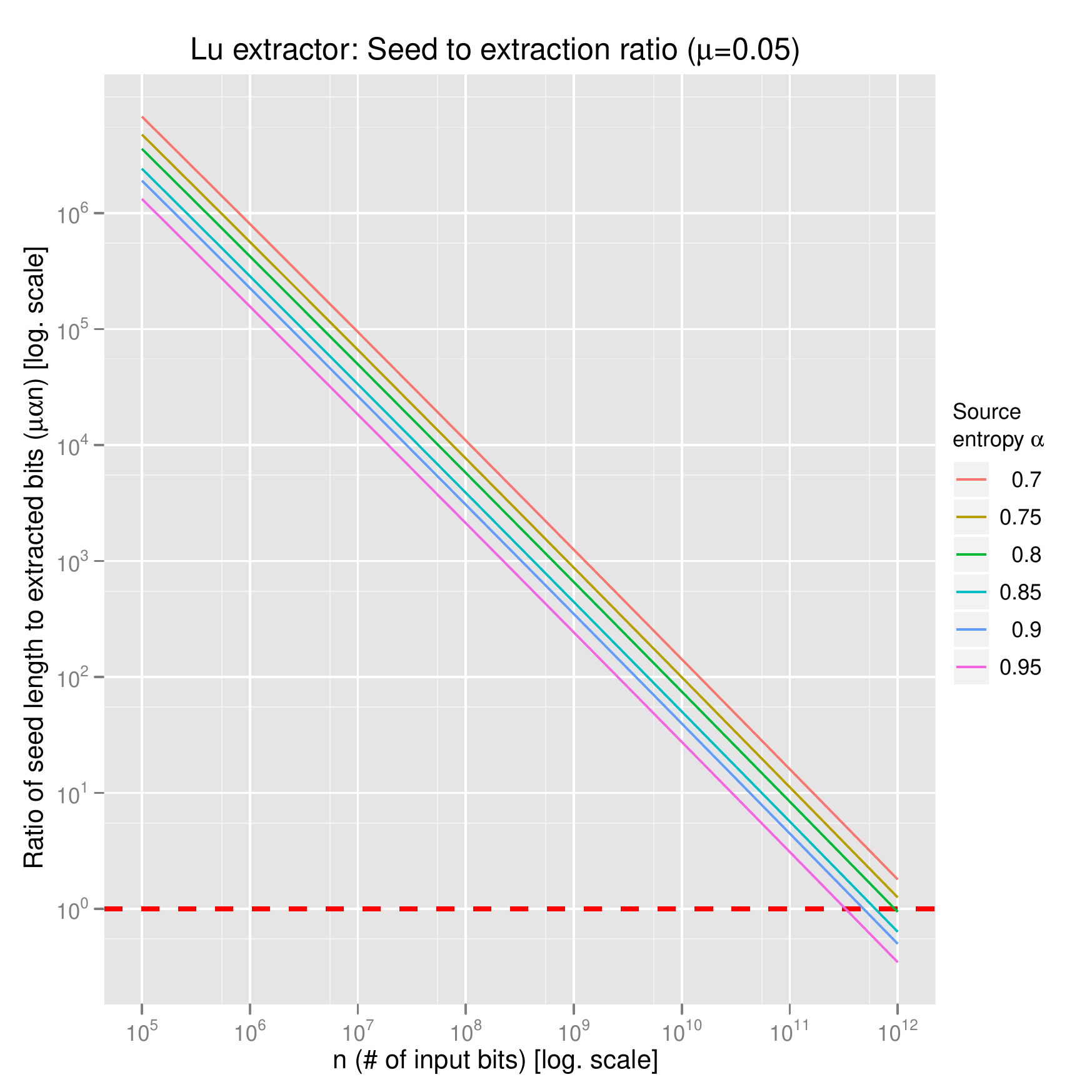}
  \caption{Comparision of seed size and output size for Lu's
    construction. The interesting regime is only reached in very rare
    cases.}\label{fig:lu_seed_ratio}
\end{figure}

\subsubsection{Polynomial hashing}
\label{sec:derivations.extractors.hashing}

Renner~\cite{Ren05} proved that universal$_2$ hash
functions\footnote{See Appendix~\ref{app:known.universal} for a
  definition of (almost) universal$_2$ hashing.} are good
extractors. Tomamichel et al.~\cite{TSSR11} showed that the same holds
for $\delta$-almost universal$_2$ ($\delta$-AU$_2$) hash functions, given that $\delta$ is
small enough. For the range of $\delta$ that build good extractors, almost
universal$_2$ hashing requires a seed of length $\Omega(m+\log n)$,
where $n$ is the input and $m$ the output length. This seed is too
large for many applications; however in the case of one-bit
extractors, this reduces to $\Omega(\log n)$, and is achievable with
the construction we describe here.

This construction is in fact the concatenation of two hash functions,
and uses a seed of length $2\ell$, where $\ell$ will be specified
later. The first is known as \emph{polynomial hashing} \--- or
alternatively as a \emph{Reed-Solomon code}, because the concatenation
of the hashes for all seeds corresponds to the encoding of the input
with a Reed-Solomon code. We partition the input string $x \in
\{0,1\}^n$ in blocks $x = (x_1,\dotsc,x_s)$, each of length $\ell$ (if
necessary, we pad the last string $x_s$ with $0$s). We view each block
as an element of a field $x_i \in \GF{2^\ell}$, and evaluate the
polynomial \[ p_\alpha(x) = \sum_{i=1}^sx_i \alpha^{s-i},\] where
$\alpha \in \GF{2^\ell}$ is the first half of the seed. This family is
$\frac{s-1}{2^\ell}$-AU$_2$.~\cite{Sti95}

Since the $\delta$ of polynomial hashing is too large (relative to the
output length) to build an extractor, we combine it with another hash
function \--- sometimes referred to as a \emph{Hadamard code}, as the
concatenation of the outputs over all seeds corresponds to the
Hadamard encoding. This hash function computes the parity of the
bitwise product of $p_\alpha(x)$ and the second half of the seed,
$\beta \in \{0,1\}^\ell$. The output is thus $z = \bigoplus_{i =
  1}^\ell \beta_i p_\alpha(x)_i$. Since this hash function is
$\frac{1}{2}$-AU$_2$, by \cite[Theorem 5.4]{Sti94} the combination of
the two is $\delta$-AU$_2$ with $\delta =
\frac{1}{2}+\frac{s-1}{2^\ell}$.

Choosing $\ell = \lceil \log n + 2 \log 1/\eps'\rceil$, $s = \lceil
n/\ell\rceil$ we get \[\delta = \frac{1}{2}+\frac{s-1}{2^\ell} <
\frac{1}{2} + \frac{n}{2^\ell} \leq \frac{1}{2} + \eps'^2.\]

From Theorem~\ref{thm:universalhash}, this is a quantum\-/proof
$(4\log\frac{1}{\eps} + 2,2\eps')$-strong extractor. And plugging this
in Trevisan's construction with a $(m,2\ell,r,d)$-design and $\eps =
2\eps'$, we get from Lemma~\ref{lem:implicit} a quantum\-/proof
$(4\log\frac{1}{\eps} + 6 + rm,m\eps)$-strong extractor. The seed of
the one-bit extractor has length $t = 2\ell = 2 \lceil \log n + 2 \log
2/\eps\rceil$, and the seed of the complete construction has length
$d$.

Figure~\ref{fig:rsh_params} discusses the parameters of the polynomial
hashing extractor.

\begin{figure}[htb]
  \centering\includegraphics[width=\linewidth]{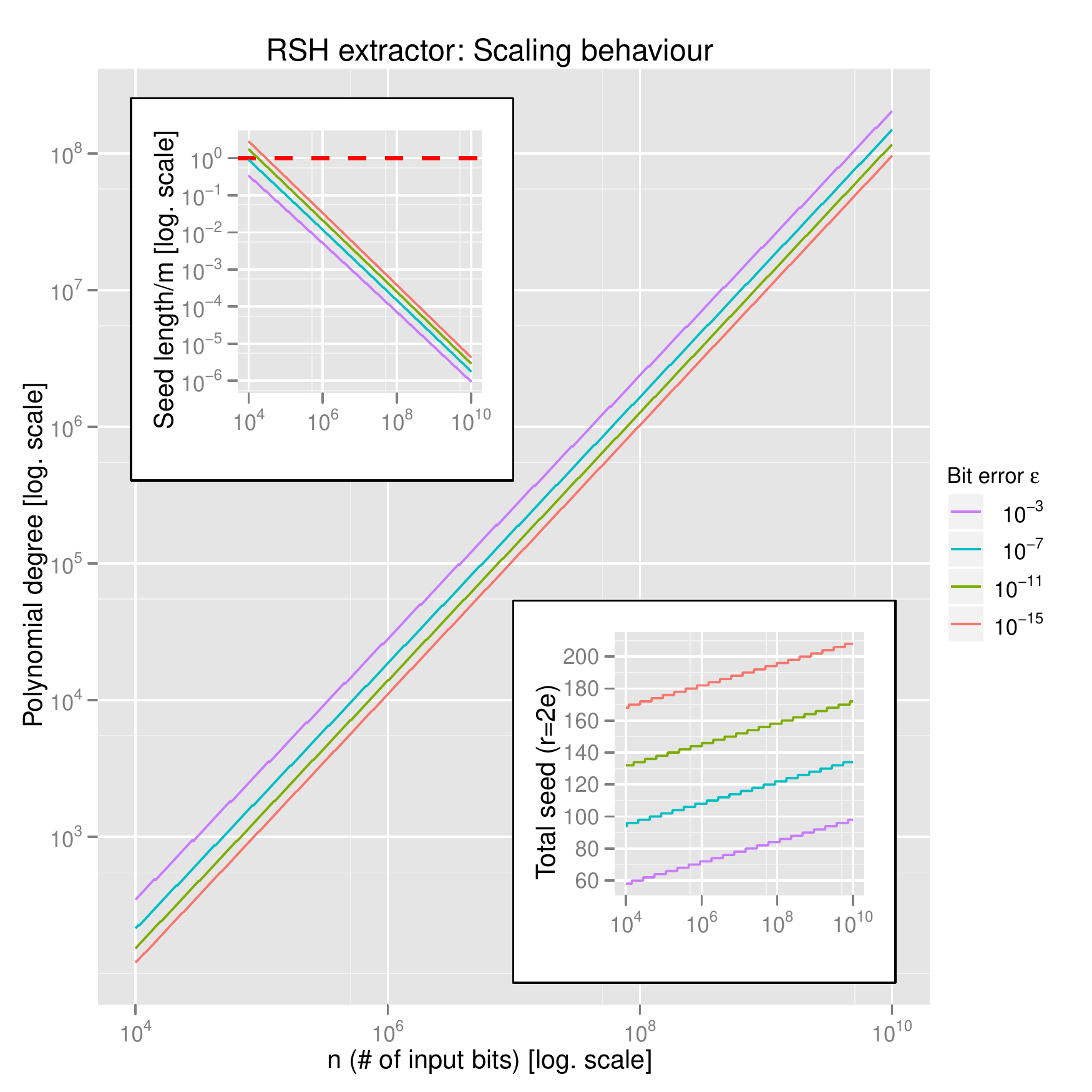}
  \caption{Polynomial hashing parameter overview (calculations are for
    \(r=2e\)). The parameters are easy to evaluate because there is no
    dependence on \(\alpha\), and there is also no extraction factor
    \(\mu\)---the extractor works equally well for high- and
    low-entropy sources. The required seed is consistently small
    (shown in the bottom inset); it increases linearly as
    \(\varepsilon\) decreases exponentially.\newline
   The degree of the polynomial that needs to be evaluated is the
    crucial factor. Even for small inputs like \(n=10^6\),
    corresponding to roughly 1~MiB of data, the degree is
    \(\approx 10000\). Since the polynomial needs to be evaluated for
    every extracted bit, this makes the polynomial hashing extractor
    an unsuitable choice for performance intensive scenarios.\newline
    The top inset shows the regime in which the extractor delivers
    more bits than initially invested for the seed. It outperforms
    two-universal hashing for a very wide range of
    parameters.}.\label{fig:rsh_params}
\end{figure}




\section{Implementation}
\label{sec:implementation}
\subsection{Implementation Architecture}
We now turn our attention to describing the implementation of the
Trevisan extraction framework by first outlining the software
architecture, that is, the high-level conceptual point of view,
followed by a discussion of some important implementation details and
notes on how to add new primitives to the infrastructure. While many
important details are still omitted for the sake of brevity, the full
source code is available at
\url{https://github.com/wolfgangmauerer/libtrevisan} for
inspection and modification. Besides instructions on how to build the
code, the website also contains detailled information on how to use
the program, which we will not discuss here any further.

\subsubsection{Architecture}
The architecture was designed to satisfy two particular constraints:
Correctness and maximum throughput. To achieve the latter, we use
C++\footnote{We rely on numerous features of the new language standard
  C++11, so at the time of writing, only sufficiently new compilers
  are able to build the code.} to implement all performance-critical
parts, since the language is statically compiled and does not require
any intermediate layers that add runtime penalties to interpreted or
byte-compiled languages like, for instance, Matlab, but still allows
us to maintain a clean and extensible design based on modern software
engineering techniques~\cite{Stroustrup2000}. The implementation is
portable across a wide range of machines from laptops to high
performance computing (HPC) machines, and also provides opportunities
to benefit from low-level capabilities of recent CPUs, for instance to
accelerate bit-level manipulations. We have tested the code on Linux
and MacOS machines.

To ensure correctness of the calculations, we base the implementation
on independent libraries (NTL~\cite{Shoup2005} and
OpenSSL~\cite{openssl} for working with finite fields of arbitrary
size) that can be selected at compile time.\footnote{NTL cannot be
  used in scenarios with high performance requirements since it is
  restricted to running on one single core per design, which does not
  agree well with contemporary machine architectures. It can only be
  used in a single primitive that requires operations on \(\GF{2^x}\)
  because the library operates with a single, global irreducible
  polynomial, which makes it effectively impossible to operate on
  fields of different dimensions simultaneously.}  Checking that both
variants arrive at the same results for identical parameter sets
increases the faith in the reliability of the calculations.  Another
means to ensure code correctness is given by a large number of
invariants and sanity checks that are spread all across the
implementation. To not compromise the performance goals, it is
possible to deactivate the checks at compile time so that they incur
no runtime penalty.

Another major design decision is the focus on multi-core machines:
Nowadays, machines with only a single core are a rare exception, and
algorithms that are limited to only one thread of execution
voluntarily sacrifice a large fraction of the available computational
power, which is obviously not desirable in a high-performance
setting. We use the threading building blocks
library~\cite{Reinders2007} as basis for the implementation, which
allows for fine-tuning the distribution of work across the system
ressources in a precise manner. We also employ a mostly lock-free
architecture (see, \eg, Ref.~\cite{Herlihy2008} for a review) that avoids
any computation stalls due to the need for synchronised communication
between computation elements.

The code also contains parts that are not performance-critical, for
instance calculating the parameters from given user settings. This is
conveniently done in very high-level languages that allow for working
in abstract terms without having to consider any details of the
underlying machine architecture. To this end, we have integrated the
possibility to call code written in the R language (using the
techniques provided by Ref.~\cite{Eddelbuettel2012}; see~\cite{Rlang}
for an overview about R), which enjoys widespread use in statistical
data processing and machine learning.

It is also possible to compute the weak design ahead of time, store it
on disk and re-use it for multiple runs of the extractor \--- since
computing the weak design is a deterministic operation that does not
require any randomness, this is admissible to do.  In matrix
representation, a weak design for output length \(m\) and a total seed
length \(d\) is an element \(\mathbb{F}_{2}^{m\times d}\). Each row
contains \(t\) ones and \(d-t\) zeroes, so the matrix fill for the
standard design is \(\frac{mt}{m(d-t)} \approx 1/t\). A total seed of
50~KiBit, for instance, amounts to a fill of about 0.5\%, which
exceeds the threshold for typical sparse matrix techniques to pay
off~\cite{Golub1996}. We found the data transfer times from the
underlying block device to be longer than the time required to compute
the weak design on the fly, albeit this may change with the
availability of high-speed storage. For the block weak design, the
situation is more favourable since only the basic design needs to be
stored, and the remaining elements can be reconstructed with very
little computational effort.

Finally, we emphasise that the code can either be used in stand-alone
mode (also including a dry-run mode for parameter estimation), or as a
library as part of a larger project.

\subsubsection{Implementation details}
Weak designs and one-bit extractors are implemented as C++ classes
derived from mixed interface/\hspace{0mm}implementation-type base
classes. Trevisan's algorithm solely operates on the base class
objects using dynamic polymorphism, and does not require any knowledge
about the internal structure of the primitives.

The source code contains full information on how to implement and
integrate new primitives, so we only summarise briefly what methods
need to be provided.

Weak designs need to be derived from \texttt{class weakdes}, and must
implement
\begin{itemize}
\item \texttt{compute\_Si(uint64\_t i, vector indices)} \--- compute
  the \(i\)th index set, and store the results in \texttt{indices}.
\item \texttt{compute\_d()} \--- compute the required amount of initial seed.
\item \texttt{get\_r()} \--- report the overlap \(r\) to the
  higher-level algorithms.
\end{itemize}

Optionally, the function \texttt{set\_params(uint64\_t, uint64\_t m)}
can, but need not be implemented to initialise the parameters required
for all weak designs.

Determining \(d\) from \(t\) seems straightforward, but is accompanied
by constraints \--- the \(\GF{2^x}\) based weak design, for instance,
only works for values of \(t\) that can be represented as a power of
2, so the design typically needs to choose larger values (resulting in
more initial seed) than requested.

One-bit extractors need to be derived from \texttt{class bitext}, and
must implement

\begin{itemize}
\item \texttt{num\_random\_bits()} \--- compute the amount
  \(t\) of initial seed bits required for every extracted bit.
\item \texttt{compute\_k()} \--- determine the minimal source entropy
  required by the extractor for the parameter set under consideration.
\item \texttt{extract(void *initial\_rand)} \--- extract one bit using the
 provided subset of the initial randomness.
\end{itemize}

There are also generic functions to assign global randomness and other
generic parameters to the 1-bit extractor. They can, but need not be
provided by an implementation.

On the lower layers, the implementation was designed to use elementary
machine arithmetic (as opposed to software-based multi-precision
arithmetic) whenever possible; this is an obvious precondition for an
implementation with good performance. In all performance critical
operations, logarithms are not computed using floating point, but with
integer operations since usually only floor or ceiling of the result
is required.

The code uses a fixed-width integer data type with 64 bits to
represent potentially large quantities like the number of input
bits. It is important to note that the width of the index data type
sets an upper bound on the amount of randomness that can be handled by
the code, namely to \(2^{w-3}\) bytes (for \(w=32\) respectively
\(w=64\)), which corresponds to \(2^{w}\) bits (the datum is used as
an \emph{index} into a bit field, and this field need not be
representable by a machine quantity). Since contemporary 64-bit
machines cannot handle more than \(2^{48}\) bytes owing to virtual
address space management limits~\cite{Mauerer2008}, the choice does
not introduce any additional limits. To process large amounts of
randomness (multiple gigabytes), 64 bit machines \emph{and} a 64-bit
kernel running on the machine are required, which the code assumes to
be the default setting.

\subsection{Algorithms}
In the following, we give a concise description of all algorithms
in a form that is helpful for actual implementations---in some
contrast to the previously given descriptions that focus more on
mathematical clarity, we provide recipes in a pseudo-formal language
that is close enough to many contemporary imperative and
object-oriented programming languages, yet still sufficiently abstract
to avoid hiding the algorithmic core behind technical
side-work. Although each algorithm can be captured with very few
statements, we remark that a practical implementation needs to account
for many non-trivial technical issues; our reference implementation
published as a part of this paper comprises about 5000~lines of source
code.

\subsubsection{Trevisan's extractor}
The Trevisan algorithm is independent of the type of weak design and
bit extractor used; only the inferred parameters depend on the
specific properties of the components:

\begin{algorithmic}[1]
\Procedure{Trevisan}{\class{WD}, \class{Ext}, $n, m, \mu, \alpha, \epsilon,
 \varrho^{\text{i}}, \varrho^{\text{d}}$}
    \State \(t \gets \mdp{Ext}{InputSize}(n, m, \mu, \alpha, \epsilon)\)
    \State \(d \gets \mdp{WD}{InputSize}(t)\)
    \State \textsc{Reserve space for \(m\) bits in \(\varrho^{\text{o}}\)}
    \State \textsc{Reserve space for \(t\) numbers \(\in [d]\) in \(S\)}
    \Statex
    \For{\(i \gets 0, m-1\)}\Comment{\textsc{Data parallel}}
        \State \(S \gets \mdp{WD}{computeS}(i)\)
        \State \(b \gets 0\)
        \For{\(j \gets 0, t-1\)}
            \State \(b_{j} \gets \varrho^{\text{i}}_{S_{j}}\)
           \Comment{\textsc{Indices refer to bits}}
        \EndFor
        \State \(\varrho^{\text{o}}_{j} \gets 
                     \mdp{Ext}{extract}(b, \varrho^{\text{d}}\))
    \EndFor
    \Statex
    \State \return{\varrho^{\text{o}}}
\EndProcedure
\end{algorithmic}

The components \class{WD} and \class{Ext} may impose boundary
conditions on the parameters; for instance, the single-bit seed length
\(t\) must be a power of a prime number for the weak designs
implemented in this paper.

\subsubsection{Weak Designs}
\paragraph{Construction of Hartman and Raz}
The weak design of Hartman and Raz is based on evaluating polynomials
over finite field; recall from Section~\ref{sec:primitives_overview}
that the dimension of the field needs to be a power of a prime
number. We have implemented two variants: One based on the extension
field \(\mathbbm{F} = \GF{2^{x}}\), and one based on the prime field
\(\mathbbm{F} = \GF{p}\). The bit extractors can require arbitrary
values of \(t\) that are not necessarily compatible with the
constraints of the weak design. In this case, \(t\) needs to be
increased to the next possible value \(t'\) that can be provided by
the weak design. Consequently, we need to distinguish between \(t\),
which represents the value that can be provided by the weak design,
and \(\treq\), which is the value originally requested by the bit
extractor. It necessarily holds that \(t \geq \treq\).

The basic algorithm for both finite fields is as follows (indices
in square brackets denote bit selections):

\begin{algorithmic}[1]
\Procedure{HR.ComputeS}{$\mathbbm{F}, i, m, t$}
\State \(c \gets \lceil \frac{\log m}{\log t} - 1\rceil\)
\For{\(j \gets 0, c\)} \Comment{Prepare polynomial coefficients}
	\State \(\alpha_{j} \gets i[j\cdot \nb{t}, j+\nb{t}-1] \mod t\)
\EndFor
\Statex
\For{\(a=0, a < \treq\)}
	\State \(b \gets \sum_{j} \alpha_{j}a^{j}\)
       \State \(S^{a}_{i}[\log(t), 2\cdot\log(t)-1] \gets b\)
        \State \(S^{a}_{i} \gets S^{a}_{i} \cdot a\)
        \State \(S^{a}_{i} \gets S^{a}_{i} \mod |\mathbbm{F}|\)
\EndFor
\Statex
\State \return{S}
\EndProcedure
\end{algorithmic}

For a field of prime dimension \(p\), all calculations are performed
modulo \(p\). Notice, though, that it is not sufficient to simply
divide by \(p\) after any multiplication (or addition/subtraction) has
been performed, because this can easily lead to intermediate results
that exceed the maximal bit width available in hardware. Multiplying
two 40-bit numbers, for instance, can result in an 80-bit value, which
exceeds the word size of 32 and 64 bit machines. A na\"ive solution
could fall back to using arbitrary-precision software arithmetic,
which is unfortunately much slower than native machine hardware
arithmetic. Consequently, we use have made sure to use algorithms that
avoid intermediate overflows and can work with multiplicands of up to
\(61\) bits, which is sufficient for our purposes. See the source code
or Ref.~\cite{Arndt2010} for details.

For the extension field \(\GF{2^m}\), it is not sufficient to perform a
simple division of arithmetic results by a scalar to satisfy the
constraints of the finite field. Instead, all elements of the field
are formally interpreted as polynomials over the binary field, and
arithmetic operations are performed modulo an irreducible polynomial
that needs to be constructed dependent on the field order. It can be
shown (see, \eg, Ref.~\cite{Shoup2005}) that for every field order,
an irreducible polynomial of order 3 or 5 exists, so calculations can
be optimised for these cases.

\paragraph{Block Weak Design}
The block weak design is based on a basic design whose matrix
representation is re-used multiple times as part of the total weak
design---once the matrix representation of the basic design is known,
it is possible to construct the complete design by placing
sub-matrices of the basic design matrix on the diagonal of a larger
matrix. One possible implementation could thus use sparse matrix
techniques to store the basic design in memory, and derive all other
blocks from this representation.

When the basic design is not represented by a matrix, but as vectors
of indices, it is possible to compute the content of
\(W_{\text{B},j}^{k}\) from the basic design row \(W_{\text{B},
  0}^{k}\) by adding \(j\cdot t^{2}\) to all values of the set \(S\)
corresponding to the matrix row. Since it is possible to re-arrange
the rows of \(W\) without changing the properties of the weak design,
we use a suitable permutation (derived from the data in
Eq.~\eqref{eq:r.m}, see the source code for details) of the rows of
\(W\) such that all rows that originate from the same row of the basic
design are adjacent to each other, which allows us to cache calls to
the basic construction.  Since the design is traversed from row to row
in the Trevisan algorithm, the permuted row order minimises calls to
the basic construction.

\begin{algorithmic}[1]
\Procedure{BWD.computeS}{\class{WD}, $i, i^{\text{c}}, S^{\text{c}}, t$}
    \State \textsc{Infer \(j, k\) from \(i\)}
    \If{\(k \neq i^{\text{c}}\)}
    	\State \(i^{\text{c}} \gets k\)
        \State \(S \gets \mdp{WD}{computeS}(i^{\text{c}})\)
        \Statex
        \For{\(\zeta \gets 0, t-1\)} \Comment \textsc{Fill cache}
        	\State \(S^{\text{c}}_{\zeta} \gets S_{\zeta}\)
       \EndFor
    \Else
  	\For{\(\zeta \gets 0, t-1\)}
            \State \(S_{\zeta} \gets S^{\text{c}}_{\zeta} + j\cdot t^{2}\)
       \EndFor
  \EndIf
  \Statex
  \State \return{S}
\EndProcedure
\end{algorithmic}

\subsubsection{1-Bit extractors}
Finally, we discuss the algorithms used for the 1-bit extractors
implemented as part of this paper.

\paragraph{XOR Code}
An implementation of the XOR code requires to derive the parameter
\(l\) from the experimental parameters; since this can be achieved by
a standard numerical optimisation, we will not discuss a formal
algorithm here, but refer the reader to the source code for the
details. The algorithm itself is compact:

\begin{algorithmic}[1]
\Procedure{XOR.extract}{$\varrho^{\text{i}}, \varrho^{\text{d}}$}
  \State \(r \gets 0\)
  \For{\(i \gets 0, l-1\)}
	  \State \(\zeta \gets \varrho^{\text{i}}[i\cdot \nb{n-1},
          	(i+1)\cdot \nb{n-1}-1]\)
          \State \(r \gets r \oplus \varrho^{\text{g}}[\zeta]\)
  \EndFor
  \State \return{r}
\EndProcedure
\end{algorithmic}

\paragraph{Polynomial Hashing}
The algorithm to perform polynomial hashing based on a concatenation of a
Reed-Solomon and a Hadamard code is as follows:

\begin{algorithmic}[1]
\Procedure{RSH.extract}{$\varrho^{\text{i}}, \varrho^{\text{d}}, n, \eps$}
  \State \(\zeta \gets 0\), \(l \gets \lceil \log n + 2\log 2/\eps\rceil\)
  \State \(s \gets \lceil n/l\rceil\)
  \Statex
  \State \textsc{Pick irreducible polynomial for \(\GF{2^l}\)}
  \For{\(i\gets 0, s-1\)} \Comment{Determine coefficients}
  	\State \(c_{i} \gets \varrho^{\text{g}}[i\cdot l, (i+1)\cdot l-1]\)
  \EndFor
  \Statex
  \State \(\alpha \gets \varrho^{\text{d}}[0:l-1]\)
  \Comment{Reed-Solomon step}
  \State \(r \gets \sum_{i=1}^{s} c_{i}\alpha^{s-i}\)
  \Comment{Computed over \(\GF{2^l}\)}
  \Statex 
  \State \(b \gets 0\) \Comment{Hadamard step}
  \For{\(j \gets 0, l-1\)}
  	\State \(b \gets b \oplus (\varrho^{\text{i}}[l+j]\cdot r[j])\)
  \EndFor
  \Statex
  \State \return{b}
\EndProcedure
\end{algorithmic}

Since the length of the global randomness is considerably exceeds the
bit length of the largest quantity representable with elementary
machine data types in all but the most pathological cases, evaluation
of the polynomial has to be performed using arbitrary precision
software arithmetic.

There are two obvious optimisations: The global randomness does not
change across invocations of the RSH extractor, so it is possible to
compute the coefficients of the polynomial once, and re-use the
results in subsequent evaluations.  In a practical implementation, it
is also more efficient to use Horner's rule for evaluating the
polynomial~\cite{Knuth1997} instead of performing the
straight-forward evaluation shown in the algorithm.

The final parity calculation is not done using single-bit operations
in the actual implementation, but is split into two steps: Firstly,
the logical ``and'' operation is computed block-wise on machine-word
sized blocks. Secondly, the parity operation is built on
special-purpose machine operations (or compiler intrinsics) to count
the number of bits set in the result of the ``and'' operation. The
parity can then be derived by checking if the bit count is even or
odd.

\paragraph{Lu's construction}
The algorithm for Lu's extractor based on a random walk on an expander
graph is as follows (we do not discuss how the optimisations required
to determine the parameters \(c\) and \(l\) are performed; see the
source code for details):

\begin{algorithmic}[1]
\Procedure{LU.extract}{$\varrho^{\text{i}}, \varrho^{\text{d}}, c, l$}
  \State \(v \gets \varrho^{\text{i}}[0:\zeta-1]\) \Comment{Initial node}
  \State \(r \gets 0\), \(b \gets 3\) \Comment{3 bits to represent an
    edge}
  \State \(w \gets \varrho^{\text{i}}[\zeta:\zeta+c(l-1)\cdot b-1]\)
  \State \(s \gets \varrho^{\text{i}}[\zeta+c(l-1)\cdot b:]\)
  \Statex
  \For{\(i \gets 0, c-1\)} 
  	\State \(r \gets r \oplus (\varrho^{\text{d}}[v]\cdot
        	s[i])\)
        \Statex
        \For{j=0, l-2} \Comment{Random walk}
        \State \(e \gets w[(i(l-1)+j)\cdot b, (i(l-1)+j+1)\cdot b-1]\)
        \State \(v \gets \call{next.vertex}{v, e}\)
        \EndFor
  \EndFor
  \State \(r \gets r \oplus (\varrho^{\text{d}}[v]\cdot s[c])\)
  \Statex
  \State \return{r}
\EndProcedure
\end{algorithmic}

\(\zeta\) denotes the number of bits required to store the index of a
node. Function \call{next.vertex}{v, e} computes the value of the
next vertex given the current vertex and the next edge; it is a
straight-forward translation of the calculation rule given earlier in
Section~\ref{sec:lu_expander}.

Most of the implementation complexity for the Lu expander stems from
the need to select subsets of bit strings. To simplify distributing
the initial randomness provided by the weak design into three
components as shown above, the actual implementation assumes that the
contributions start on indices that are evenly divisible by the bit
width of the data type used to represent edges. This simplifies the
implementation, but implies that a slightly larger amount of
randomness than theoretically possible is required, albeit the
increase is only by a negligible additive factor.

\section{Runtime comparison}
\label{sec:runtime}
Owing to the many aspects---throughput, scalability, weak design
versus extractor performance, parameter ranges, machine
characteristics, among others---involved in determining code performance, and
because of the large number of combinations of primitives, it is
neither possible nor reasonable to present measurements for all
cases (since the full sources are available, measurements for
a particular case of interest can be easily conducted by interested
parties). Instead, we focus on a selection of measurements that
describe cases of typical experimental interest. We use two machines
to run the tests; detailled technical specifications are shown in
Table~\ref{tab:meas_machines}. One machine is a standard Laptop
(MacBook Air) that allows for testing the performance on an average
personal computer, and serves as an apt comparison basis to the
machine used for the measurements in Ref.~\cite{Ma}. The second
machine is a sizeable workstation that gives an indication for the
behaviour in high-performance computing scenarios, or when one is
willing to spend substantial computational effort on the
post-processing, for example in scenarios in which the highest
possible security is the foremost priority.

\begin{table}[htb]
  \begin{tabular}{llllllll}
    Machine & CPU &
    \rotatebox{90}{\# CPUs} & \rotatebox{90}{Cores/CPU} & 
    \rotatebox{90}{Threads/core} & \rotatebox{90}{\(\Sigma\) Threads} &
    \rotatebox{90}{RAM [GiB]} & Kernel \\\hline
    Laptop & Intel Core i5 & 1 & 2 & 2 & 4 &
    4 & Darwin 11.4.0\\[-0.2em]
    & 1.6~Ghz & & & & & &\\[0.2em]
    Workstation & AMD Opteron & 8\footnote{Pairs of two CPUs share one
    socket} & 6 & 
    1 & 48 & 32 & Linux 3.0\\[-0.2em]
    & 1.9~Ghz & & & & & &\\\hline
  \end{tabular}
  \caption{Machines deployed in the benchmark
    measurements.}\label{tab:meas_machines}
\end{table}

The measurement results are shown in
Figures~\ref{fig:scale_n_rsh_block}, \ref{fig:perf_compare_ma},
\ref{fig:xor_gfp_scaling_n}, \ref{fig:xor_rsh_per_core_compare}, and
\ref{fig:block_rsh_cores_scaling}; refer to the captions for a
detailled discussion of the results.

\begin{figure}[htb]
 \includegraphics[width=\linewidth]{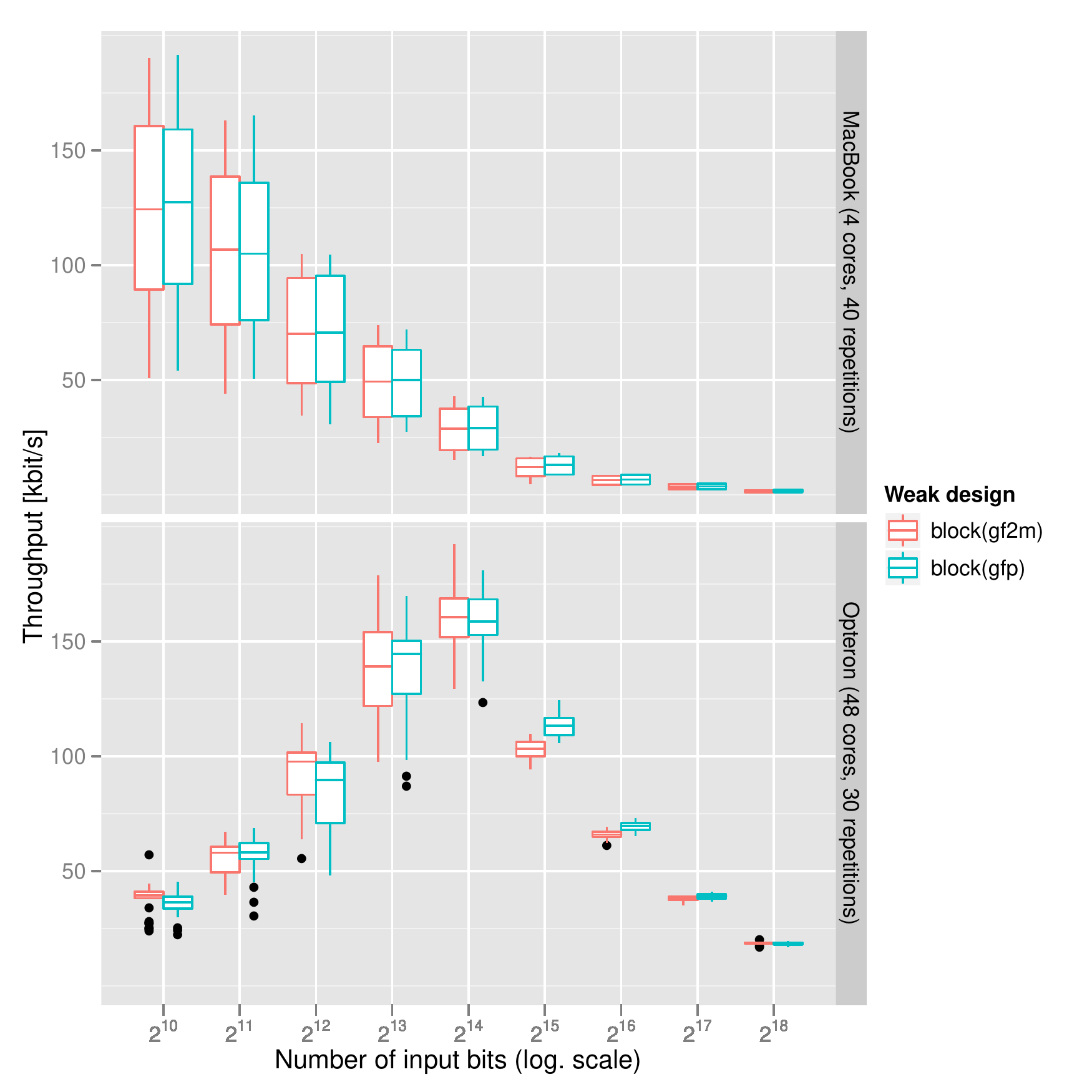} 
 \caption{Scaling behaviour of RSH with a block design for varying
   input lengths. For a small number of CPUs, performance degrades
   considerably with increasing input length, as expected for a
   non-local extactor. Good throughput (more than 100~kbit/s) is only
   obtained fo very small input sizes (\(2^{12}\) is only 4KiBit of
   data!) for which the required amount of initial seed drastically
   exceeds the extracted amount of randomness.\newline
   With many cores, the achieved speed-up does initially \emph{not}
   compensate the overhead for setting up and performing parallel
   operations, so the throughput increases to a local maximum, and
   then decreases as expected with larger input lengths. Consequently,
   it is not just sufficient to add more CPUs for a given scenario to
   increase throughput; practical book-keeping tasks and technical
   aspects can easily dominate the actual problem.  In particular, this
   implies that purely technical improvements like porting the
   processing to massively parallel approaches like GPU computing will
   not automatically resolve all performance needs; a proper choice of
   primitives for given requirements is essential, which is only
   possible with a framework that allows for flexibly combining
   these primitives.}\label{fig:scale_n_rsh_block}
\end{figure}

\begin{figure}[htb]
  \includegraphics[width=\linewidth]{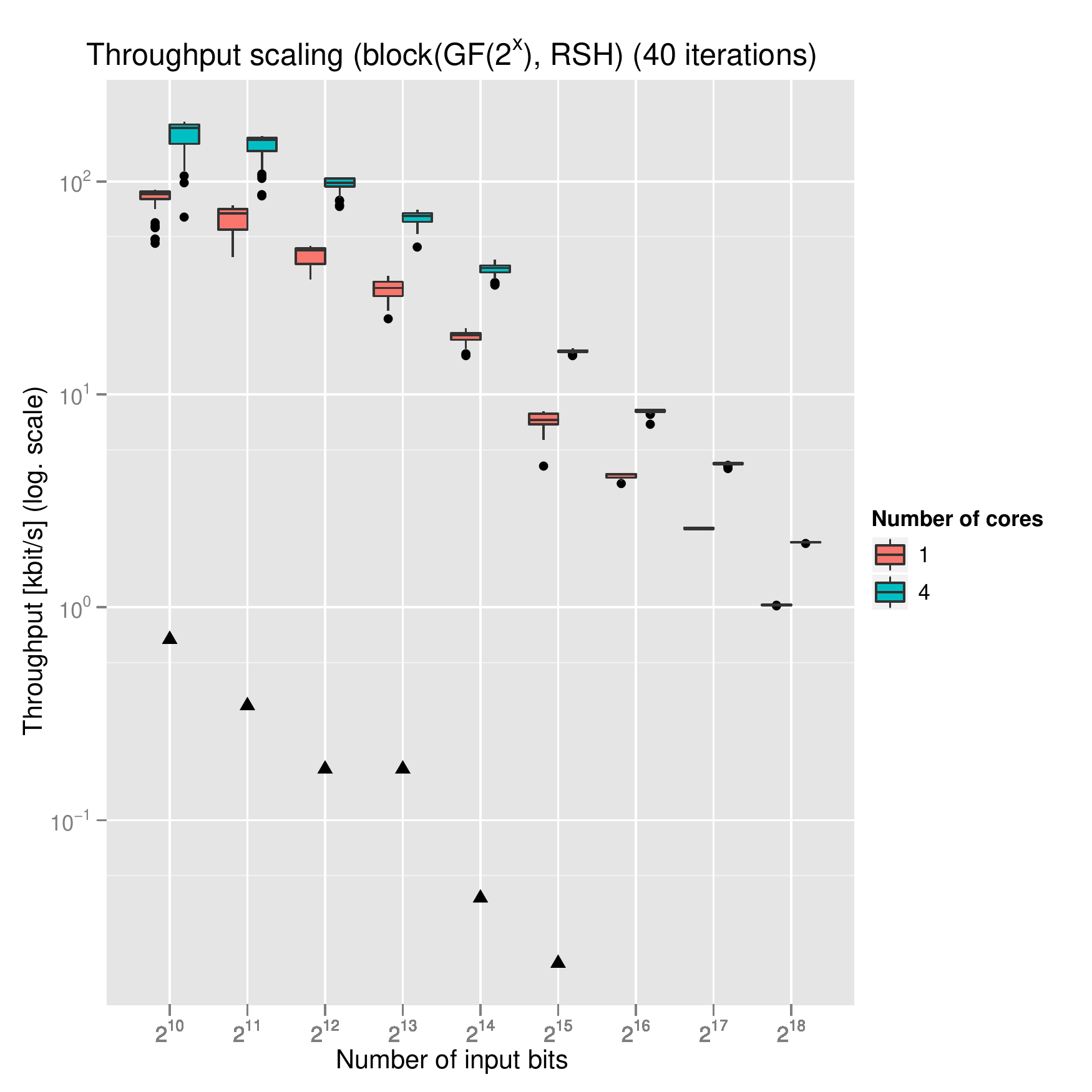}
  \caption{Throughput comparison of our results (obtained on a laptop,
    represented by boxplots) with the results obtained by Ma et
    al.~(represented by triangles) for the combination of primitives
    supported by their implementation. Since the code of~\cite{Ma}
    seems to be limited to running on one CPU core, we have also
    included an artificially contrained measurement measurement for
    the code discussed in this paper. Generally, our framework is 2--3
    orders of magnitude faster in terms of throughput, and allows for
    dealing with inputs that surpass Ref.~\cite{Ma} by many orders of
    magnitude.
}
  \label{fig:perf_compare_ma}
\end{figure}

\begin{figure}[htb]
  \includegraphics[width=\linewidth]{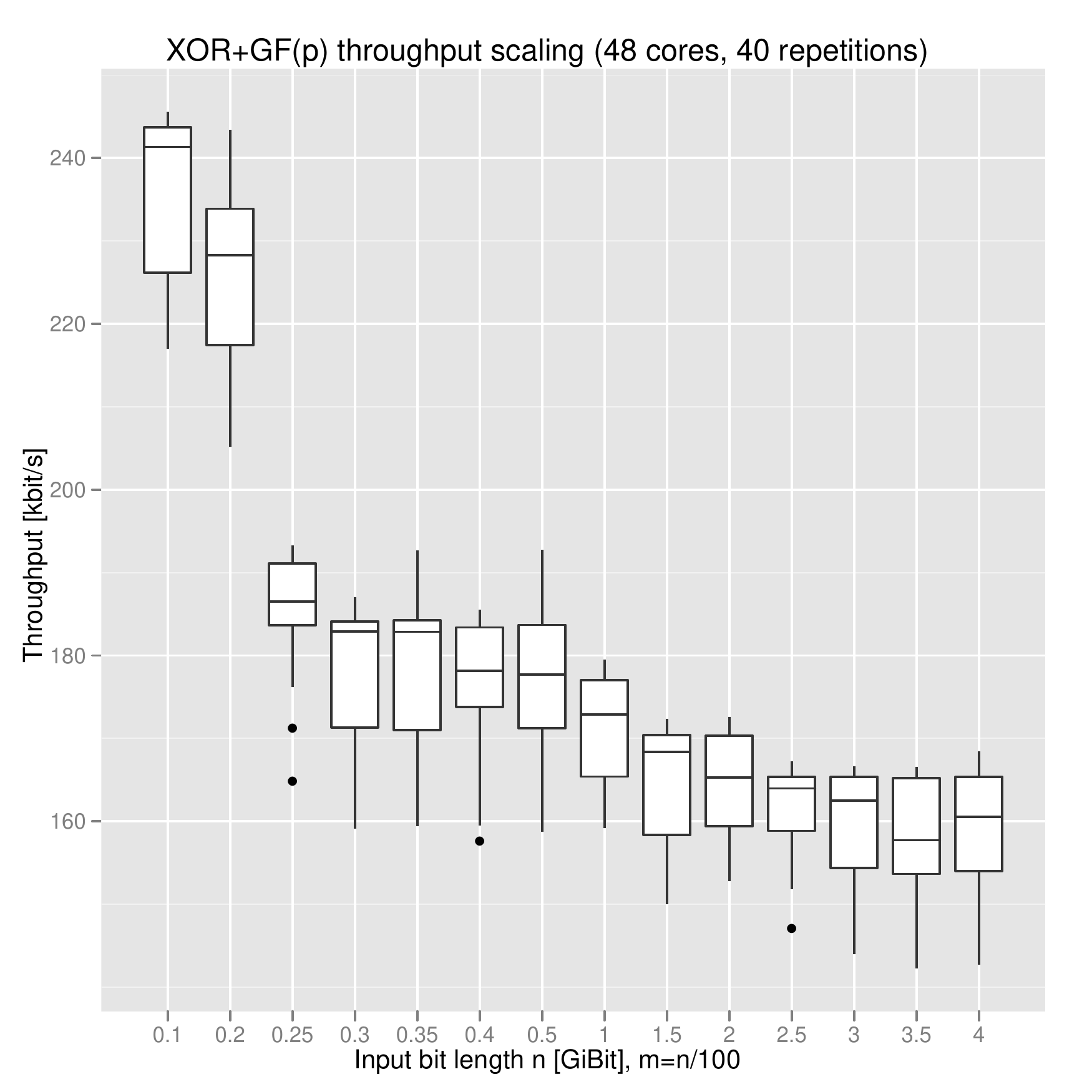}
  \caption{Scaling behaviour of XOR/\(\GF{p}\) for increasing input
    size~\(n\). Although there is a marked decrease in performance for
    input lengths of more than 200~MiBit, a throughput of at least
    160~kbits/s is sustained even for multi-GiBit input lengths (since
    the XOR extracor is local, there is no convergence towards a zero
    throughput rate for long inputs), and matches the requirements of
    typical quantum key distribution mechanisms curently under
    discussion.\newline
    Since only 1\% of the \emph{input} is extracted, the
    code needs to deal with input data rates of 16--20 MiBit/s,
    making the primitives suitable to extract randomness from fast
    random number sources---one example being, for instance,
    Ref.~\cite{Gabriel2010}.}.\label{fig:xor_gfp_scaling_n}
\end{figure}

\begin{figure}[htb]
  \includegraphics[width=\linewidth]{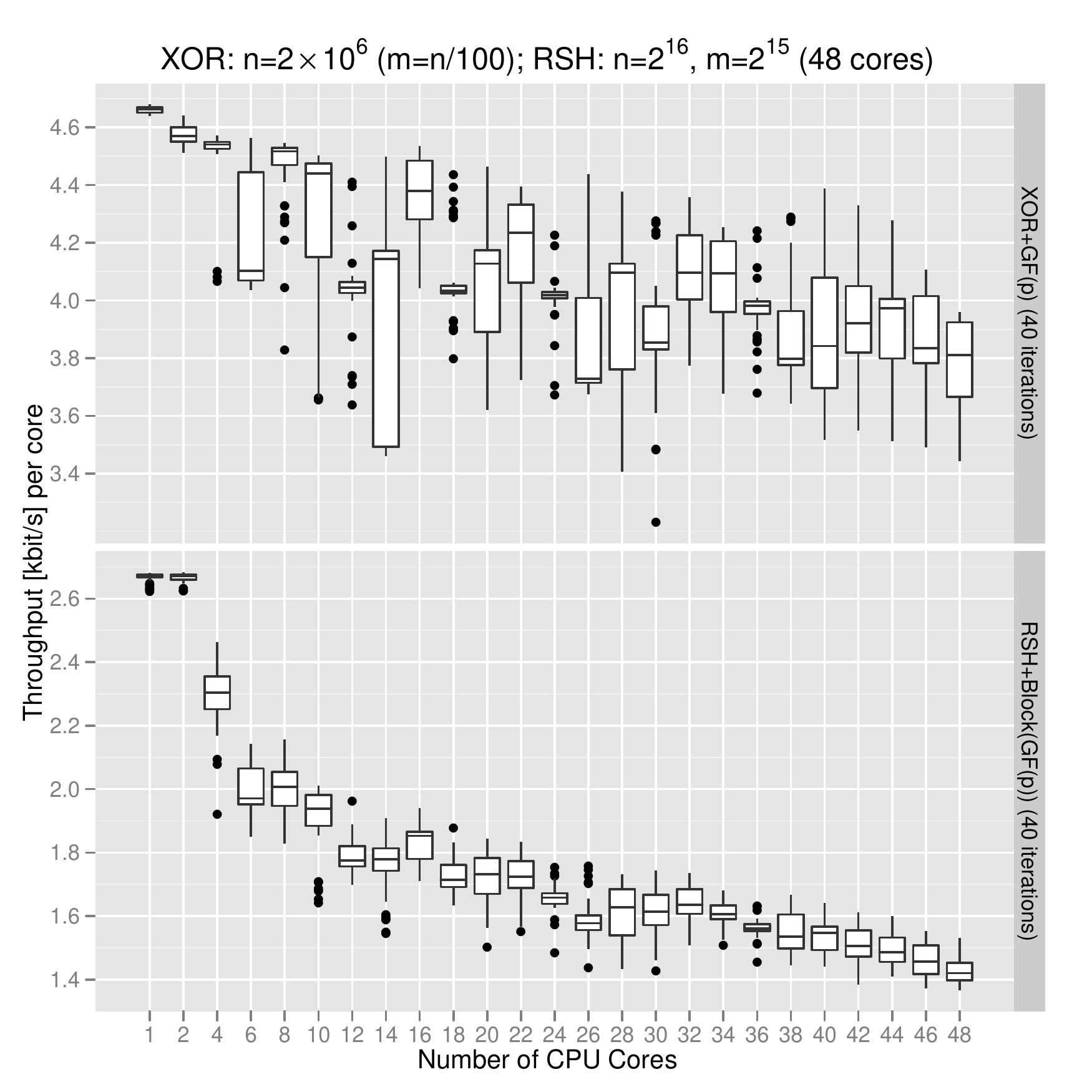}
  \caption{Comparison of per-core performance for XOR/\(\GF{p}\) and
    RSH/Block(\(\GF{p}\)) primitive combinations. The per-core
    throughput for the local XOR extractor drops to about
    75~\% of the sincle-Core performance for a very large
    number of cores (48), which makes it an excellent choice for
    massively parallel systems. For the RSH extractor, performance in
    the many-core case is only half of the performance of a single
    CPU, which can be attributed to the larger amount of data over
    which the primitive combination needs to iterate, and the
    subsequently increased load on the system
    busses.}\label{fig:xor_rsh_per_core_compare}
\end{figure}

\begin{figure}[htb]
  \includegraphics[width=\linewidth]{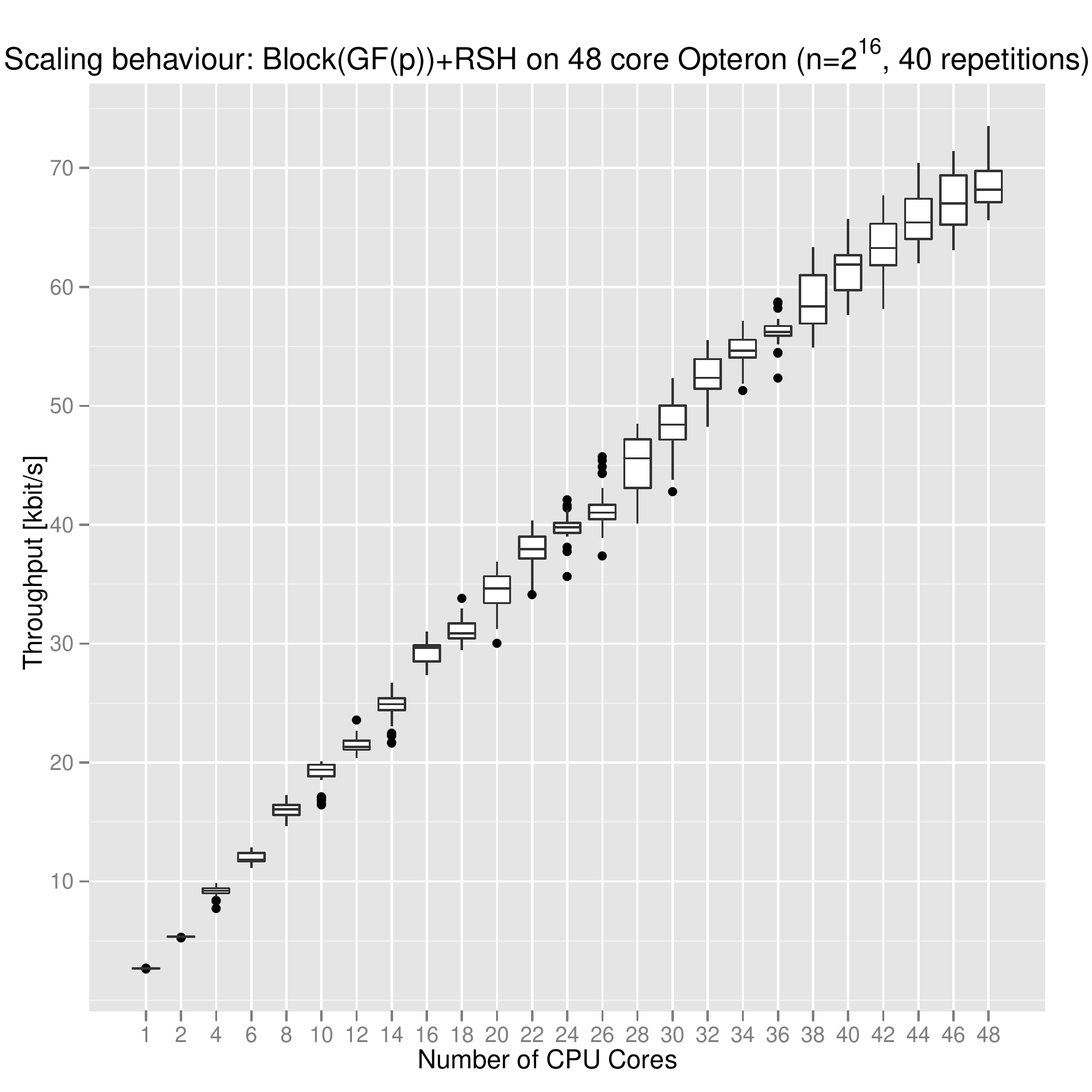}
  \caption{Throughput scaling of the block(\(\GF{p}\)/RSH primitive
    combination for an increasing amount of CPUs. The speedup is well
    below one even for a moderate number of involved cores, and sees a
    further slow-down in the many-core case. Nonetheless, data rates
    that are reasonable for practical application are obtained, making
    the primitive combination a viable choice to post-process the
    output of slow devices for which it is important to not sacrifice
    valuable entropy, as is the case for the faster XOR
    extractor. While the per-core measurement in
    Figure~\ref{fig:xor_rsh_per_core_compare} is more interesting from
    a scalability point of view, this figure provides guidance to what
    ressources are necessary to satisfy given experimental
    constraints.}
  \label{fig:block_rsh_cores_scaling}
\end{figure}



\section*{Summary}
We have presented a modular, scaleable implementation of Trevisan's
construction for randomness extraction, together with detailled
parameter derivations and improved mathematical proofs. We have shown
that the feasibility or non-feasibility of Trevisan's scheme is not
mainly a question of computational complexity issues, but does depend
on the particular choice of primitives used as components of the
algorithm; different scenarios require different
constituents. Although our measurements indicate that there exist use
cases that require theoretical improvements to make Trevisan's
construction applicable (mostly because short-seed extractors all
suffer from a low extraction rate), the implementation can, for
instance, satisfy the needs of all current quantum key distribution
schemes. The authors hope that the public availability of the source
code, together with the extensible architecture, will spawn
contributions from other researchers to turn future theoretical
progress into practical results.

\section*{Acknowledgements}
WM acknowledges architectural advice on multi-core issues from
T.~Sch{\"u}le, thanks U.~Gleim for a few CPU months, and the ETH
Z{\"u}rich for their hospitality during his stay.

CP is supported by the Swiss National Science Foundation
(via grant No.~200020-135048 and the National Centre of Competence in
Research `Quantum Science and Technology'), and the European Research
Council -- ERC (grant no. 258932).

The authors thank B.~Heim for helpful comments on an earlier draft of
this paper.
\begin{appendix}

\section{Extractor definitions}
\label{app:extractordefs}

An extractor $\Ext: \{0,1\}^n \times \{0,1\}^d \to \{0,1\}^m$ is a
function which takes a weak source of randomness $X$ and a uniformly
random, short seed $Y$, and produces some output $\Ext(X,Y)$, which is
almost uniform. The extractor is said to be strong, if the output is
approximately independent of the seed.

The distance from uniform is measured by the trace distance, defined
as $d(\rho,\sigma) \coloneqq \frac{1}{2} \trnorm{\rho-\sigma}$, where
$\trnorm{\cdot}$ denotes the trace norm given by $\trnorm{A} \coloneqq
\tr \sqrt{\hconj{A}A}$.

\begin{deff}[strong extractor~\cite{NZ96}]
  \label{def:extractor}
  A function $\Ext: \{0,1\}^n \times \{0,1\}^d \to \{0,1\}^m$ is a
  \emph{$(k,\eps)$-strong extractor}, if for all
  distributions $X$ with min-entropy $\Hmin{X} \geq k$ and a uniform seed $Y$, we
  have\footnote{A more standard classical notation would be
    $\frac{1}{2} \left\| \Ext(X,Y) \circ Y - U \circ Y \right\| \leq
    \eps$, where the distance metric is the variational
    distance. However, since classical random variables can be
    represented by quantum states diagonal in the computational basis,
    and the trace distance reduces to the variational distance, we use
    the quantum notation for compatibility with the rest of this
    work.} \[\frac{1}{2} \trnorm{ \rho_{\Ext(X,Y)Y} - \tau_{U}
    \otimes \rho_Y} \leq \eps, \] where $\tau_{U}$ is the fully
  mixed state on a system of dimension $2^m$.
\end{deff}

When (quantum) side information $E$ about the source $X$ is present,
the randomness of the source is measured relative to this side
information. We also require the output of the extractor to be close
to uniform and independent from $E$.

\begin{deff}[quantum\-/proof strong
  extractor~\protect{\cite[Section 2.6]{KR11}}]
  \label{def:quantumextractor}
  A function $\Ext: \{0,1\}^n \times \{0,1\}^d \to \{0,1\}^m$ is a
  \emph{quantum\-/proof} (or simply \emph{quantum})
  \emph{$(k,\eps)$-strong extractor}, if for all
  states $\rho_{XE}$ classical on $X$ with $\Hmin[\rho]{X|E} \geq k$,
  and for a uniform seed $Y$, we have
  \[\frac{1}{2} \trnorm{
    \rho_{\Ext(X,Y)YE} - \tau_{U} \otimes \rho_Y \otimes \rho_E}
  \leq \eps, \]
  where $\tau_{U}$ is the fully mixed state on a
  system of dimension $2^m$.

  The function $\Ext$ is a \emph{classical\-/proof $(k,\eps)$-strong
    extractor with uniform seed} if the same holds with the system $E$
  restricted to classical states.
\end{deff}

Note that any conventional extractor (Definition~\ref{def:extractor})
is classical\-/proof with slightly weaker parameters.

\begin{lem}[\protect{\cite[Section 2.5]{KR11},\cite[Proposition 1]{KT08}}]
  \label{lem:classicalproof}
  Any $(k,\eps)$-strong extractor is a classical\-/proof $(k+\log
  1/\eps,2\eps)$-strong extractor.
\end{lem}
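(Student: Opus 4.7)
The plan is to reduce the classical-side-information case to the side-information-free guarantee by conditioning on the value of $E$. Concretely, since $E$ is classical, we may write
\[ \rho_{XYE} = \sum_e P_E(e)\, \rho_{XY|E=e} \otimes \ket{e}\bra{e}, \]
and similarly $\tau_U \otimes \rho_Y \otimes \rho_E$ is block-diagonal in the same basis. Taking the trace norm of a block-diagonal matrix block-by-block, the extractor error on the joint state equals the $P_E$-weighted average of the per-$e$ extractor errors
\[ d\bigl(\rho_{\Ext(X,Y)YE},\, \tau_U \otimes \rho_Y \otimes \rho_E\bigr) = \sum_e P_E(e)\, d\bigl(\rho_{\Ext(X,Y)Y|E=e},\, \tau_U \otimes \rho_Y\bigr). \]
So it suffices to control each conditional extractor error and then average.

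Next I would partition the values of $e$ into ``good'' values, for which $\Hmin(X|E{=}e) \geq k$, and ``bad'' values, for which this fails. Unfolding the assumption $\Hmin(X|E) \geq k + \log(1/\eps)$ using $p_{\mathrm{guess}}(X|E) = \sum_e P_E(e)\max_x P_{X|E}(x|e)$, the hypothesis reads
\[ \sum_e P_E(e) \max_x P_{X|E}(x|e) \leq \eps \cdot 2^{-k}. \]
A one-line Markov-type estimate then yields $\Pr[E \text{ is bad}] < \eps$, because each bad $e$ contributes at least $P_E(e)\cdot 2^{-k}$ to the left-hand side.

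For each good $e$, the conditional source $X|E{=}e$ has min-entropy at least $k$, so the hypothesis that $\Ext$ is a $(k,\eps)$-strong extractor (Definition~\ref{def:extractor}) gives conditional extractor error at most $\eps$. For each bad $e$, I bound the trace distance trivially by $1$. Plugging these into the block-wise decomposition above:
\[ d\bigl(\rho_{\Ext(X,Y)YE},\, \tau_U \otimes \rho_Y \otimes \rho_E\bigr) \leq \Pr[E \text{ good}]\cdot \eps + \Pr[E \text{ bad}] \cdot 1 \leq \eps + \eps = 2\eps, \]
which is the desired conclusion.

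The only subtle point \--- really the ``main obstacle,'' such as it is \--- is to verify that the block-diagonality argument correctly reduces the joint trace distance to an average of conditional trace distances; once that identity is in hand the rest is a clean Markov/union-style argument, and the choice $\log(1/\eps)$ of extra entropy is exactly what makes the probability of a bad $e$ at most $\eps$, balancing the two contributions to give the stated $2\eps$.
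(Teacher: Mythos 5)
Your proof is correct, and it is precisely the standard conditioning-plus-Markov argument given in the cited sources (in particular \cite[Proposition~1]{KT08}); the paper itself states this lemma by citation only and offers no proof of its own. The two steps you flag as the crux \--- the block-diagonal reduction of the joint trace distance to a $P_E$-weighted average of conditional distances, and the Markov bound showing $\Pr[\Hmin(X|E{=}e) < k] < \eps$ from $\Hmin(X|E) \geq k + \log(1/\eps)$ \--- both go through exactly as you describe, and they are consistent with the guessing-probability form of the conditional min-entropy that the paper adopts.
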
 

In the extractor constructions described in Section
\ref{sec:derivations}, we are particularly interested in extractors
which only need to process a few bits of the input for every bit of
output. These extractors are called \emph{local}, and defined as
follows.

\begin{deff}[$\ell$-local extractor \cite{Vad04}]
  \label{def:localextractor}
  An extractor $\Ext: \{0,1\}^n \times \{0,1\}^d \to \{0,1\}^m$ is
  \emph{$\ell$-locally computable} (or \emph{$\ell$-local}), if for
  every $y \in \{0,1\}^d$, the function $x\mapsto \Ext(x,y)$ depends
  on only $\ell$ bits of its input, where the bit locations are
  determined by $y$.
\end{deff}

This notion of local extractors applies equally to extractors with and
without (quantum) side information.

\section{Known extractor results}
\label{app:known}

The next sections contain many known theorems on extractors, which we
need to derive the parameters of the constructions from
Section~\ref{sec:derivations}.

\subsection{List-decodable codes}
\label{app:known.ldc}

A standard error correcting code guarantees that if the error is
small, any string can be uniquely decoded. A list-decodable code
guarantees that for a larger (but bounded) error, any string can be
decoded to a list of possible messages.

\begin{deff}[list-decodable code \cite{Sud00}]
  A code $C : \{0,1\}^n \to \{0,1\}^{\bar{n}}$ is said to be
  $(\eps,L)$-list-decodable if every Hamming ball of relative
  radius $1/2 - \eps$ in $\{0,1\}^{\bar{n}}$ contains at most
  $L$ codewords.
\end{deff}

List-decodable error correcting codes are known to be $1$-bit
extractors~\cite{Lu2004,Vad04}. This has been rewritten out
explicitly in \cite{De2012}.
\begin{lem}[\protect{\cite[Theorem D.3\footnote{In the arXiv version,
      this theorem is numbered C.3}]{De2012}}]
  \label{lem:codesRextractors}
  Let $C : \{0,1\}^n \to \{0,1\}^{\bar{n}}$ be an
  $(\eps,L)$-list-de\-co\-dable code. Then the function 
  \begin{align*}
    C' : \{0,1\}^n \times [\bar{n}] & \to \{0,1\} \\
      (x,y) & \mapsto C(x)_y,
    \end{align*}
  is a $(\log L + \log \frac{1}{2\eps}, 2 \eps)$-strong
  extractor.
\end{lem}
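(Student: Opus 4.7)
My plan is to prove the lemma by contrapositive: assume the strong-extractor condition fails with error larger than $2\eps$ and derive an upper bound on the min-entropy of $X$ that contradicts the hypothesis $\Hmin{X} \geq \log L + \log\tfrac{1}{2\eps}$. The bridge between the two worlds is the observation that a would-be distinguisher for $C'(X,Y)$ can be repackaged as a single string $w \in \zo{\bar n}$ close (in relative Hamming distance) to many codewords $C(x)$, which is exactly the object the list-decodability property bounds.

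Concretely, I would first reduce the trace-distance statement of Definition~\ref{def:extractor} to an agreement statement. Since the output is a single bit, the optimal distinguisher is without loss of generality deterministic and a function $f:[\bar n]\to\{0,1\}$ of the seed; a short calculation shows the statistical distance between $(C'(X,Y),Y)$ and $(U,Y)$ equals $\max_f \Pr[C(X)_Y = f(Y)] - \tfrac{1}{2}$. Assuming the extractor fails, there is thus an $f$, or equivalently a word $w \in \zo{\bar n}$ with $w_y = f(y)$, such that $\Pr[C(X)_Y = w_Y] > \tfrac{1}{2} + 2\eps$. Writing $\alpha_x \coloneqq \Pr_Y[C(x)_Y = w_Y]$, this is $\mathbb E_X[\alpha_X] > \tfrac{1}{2} + 2\eps$.

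Next I would isolate the ``good'' support $T \coloneqq \{x : \alpha_x > \tfrac{1}{2} + \eps\}$. For every $x \in T$, the codeword $C(x)$ lies in the Hamming ball of relative radius $\tfrac{1}{2} - \eps$ around $w$; list-decodability forces $|T| \leq L$. On the other hand, splitting the expectation,
\begin{equation*}
\tfrac{1}{2} + 2\eps < \mathbb E_X[\alpha_X] \leq \Pr[X \in T] + (1 - \Pr[X\in T])\bigl(\tfrac{1}{2} + \eps\bigr),
\end{equation*}
which after rearranging yields $\Pr[X \in T] > \frac{\eps}{1/2 - \eps} \geq 2\eps$. Combining these two inequalities gives $\max_x p_x \geq \Pr[X\in T]/|T| > 2\eps/L$, so $\Hmin{X} < \log L + \log\tfrac{1}{2\eps}$, contradicting the assumption.

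The step I expect to be the main obstacle is the first one: translating the trace-distance bound to the existence of a prediction function with the correct $2\eps$ agreement advantage (rather than $\eps$). One must be careful that for a $1$-bit output the trace distance equals $\mathbb E_Y\bigl[|\Pr[Z=0|Y] - \tfrac{1}{2}|\bigr]$ (not half of it), so the bit-wise majority function $f$ really gains the full $2\eps$ above $\tfrac{1}{2}$; otherwise one loses a factor of $2$ and the final entropy bound becomes $\log L + \log\tfrac{1}{\eps}$ instead of $\log L + \log\tfrac{1}{2\eps}$. After this bookkeeping is pinned down, the remainder is a clean averaging argument followed by a direct invocation of the list-decodability definition.
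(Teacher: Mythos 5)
Your proof is correct, and it is essentially the standard predictor-to-list-decoding reduction; note that the paper itself does not prove this lemma but imports it verbatim from De et al.\ (Theorem D.3 of \cite{De2012}), where the argument is the same contrapositive: a failure of the extractor yields a word $w$ agreeing with $C(X)$ on a $\tfrac{1}{2}+2\eps$ fraction of positions, the heavy set $T$ lands inside a single Hamming ball of relative radius $\tfrac{1}{2}-\eps$ and so has size at most $L$, and averaging forces $\max_x p_x > 2\eps/L$. You have also handled the one delicate point correctly \textemdash{} that for a one-bit output the trace distance equals $\mathbb{E}_Y\bigl[\lvert\Pr[Z=0\,|\,Y]-\tfrac{1}{2}\rvert\bigr]$, so the majority predictor gains the full $2\eps$ advantage and the bound comes out as $\log L+\log\tfrac{1}{2\eps}$ rather than $\log L+\log\tfrac{1}{\eps}$.
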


As noted in a footnote of \cite{De2012}, this lemma can be
strengthened to \emph{classical\-/proof} extractors.
\begin{lem}
  \label{lem:codesRextractors.bis}
  Let $C : \{0,1\}^n \to \{0,1\}^{\bar{n}}$ be an
  $(\eps,L)$-list-de\-co\-dable code. Then the function 
  \begin{align*}
    C' : \{0,1\}^n \times [\bar{n}] & \to \{0,1\} \\
      (x,y) & \mapsto C(x)_y,
    \end{align*}
    is a \emph{classical\-/proof} $(\log L + \log \frac{1}{2\eps}, 2
    \eps)$-strong extractor.
\end{lem}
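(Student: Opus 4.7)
The plan is to follow the proof of Lemma \ref{lem:codesRextractors} from \cite{De2012} and verify that every step carries through when the side information $E$ is classical. The crucial observation is that for classical $E$, both the extractor error and the conditional min-entropy split cleanly over the values of $e$: the trace distance from uniform equals $\bar\delta := \sum_e P_E(e)\delta_e$, where $\delta_e$ is the ordinary (no-side-information) error of $C'$ evaluated on the conditional distribution $X\mid E=e$, and $2^{-\Hmin{X|E}} = \sum_e P_E(e)\,2^{-\Hmin{X|E=e}}$. I would therefore argue per-$e$ and then average.

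I would proceed by contraposition: assume $\bar\delta > 2\eps$ and deduce $\Hmin{X|E} < \log L + \log\tfrac{1}{2\eps}$. For each $e$, exploit the one-bit, uniform-seed setting to identify $\delta_e$ with the optimal prediction advantage, producing a function $f_e:\{0,1\}^d\to\{0,1\}$ with $\Pr_Y[f_e(Y)=C(X)_Y\mid E=e] = \tfrac{1}{2}+\delta_e$. Collecting its values yields a string $s_e\in\{0,1\}^{\bar{n}}$, and averaging over $X\mid E=e$ gives $E_{X|E=e}[\mathrm{agr}(s_e,C(X))]\ge \tfrac{1}{2}+\delta_e$, where $\mathrm{agr}$ denotes relative Hamming agreement.

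A one-line Markov-type calculation (using $\mathrm{agr}\le 1$) then shows, for each $e$ with $\delta_e > \eps$, that $\Pr_{X|E=e}[\mathrm{agr}(s_e,C(X))\ge \tfrac{1}{2}+\eps] \ge (\delta_e-\eps)/(\tfrac{1}{2}-\eps) \ge 2(\delta_e-\eps)$. Call this set of $x$'s $T_e$. By the $(\eps,L)$-list-decodability applied to the ball of relative radius $\tfrac{1}{2}-\eps$ around $s_e$, one gets $|T_e|\le L$, hence $2^{-\Hmin{X|E=e}} \ge 2(\delta_e-\eps)/L$ (and trivially $\ge 0$ when $\delta_e\le\eps$). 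Averaging against $P_E$ and using $(\delta_e-\eps)^+\ge \delta_e-\eps$ yields
\begin{equation*}
2^{-\Hmin{X|E}} = \sum_e P_E(e)\, 2^{-\Hmin{X|E=e}} \ge \frac{2}{L}\sum_e P_E(e)(\delta_e-\eps)^+ \ge \frac{2(\bar\delta-\eps)}{L} > \frac{2\eps}{L},
\end{equation*}
which rearranges to $\Hmin{X|E} < \log L + \log\tfrac{1}{2\eps}$, contradicting the hypothesis.

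The one delicate point, and the main obstacle to preserving the parameters of Lemma \ref{lem:codesRextractors}, is that the list-decoding guarantee is available only at the single fixed radius $\tfrac{1}{2}-\eps$, so the per-$e$ advantage $\delta_e$ cannot simply be fed into the list-decoding bound. One must fix the radius at $\tfrac{1}{2}-\eps$ and absorb the slack $\delta_e-\eps$ in the averaging step; the inequality $(\delta_e-\eps)^+\ge \delta_e-\eps$, combined with $\bar\delta > 2\eps$, is precisely what makes the slack disappear after taking the $P_E$-weighted average. This is why the classical-proof statement holds with no additional $\log(1/\eps)$ penalty, avoiding the weaker parameters that would arise from invoking Lemma \ref{lem:classicalproof} as a black box.
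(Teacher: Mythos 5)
Your proof is correct, and the key steps all check out: the decomposition of both the error and $2^{-H_{\min}(X|E)}$ as $P_E$-averages over $e$, the identification of the per-$e$ one-bit error $\delta_e$ with a prediction advantage yielding the string $s_e$, the Markov step giving $\Pr_{X|E=e}[x\in T_e]\ge 2(\delta_e-\eps)$, and the final averaging that absorbs the slack $\delta_e-\eps$ so that $\bar\delta>2\eps$ forces $H_{\min}(X|E)<\log L+\log\frac{1}{2\eps}$. The paper itself gives no proof of this lemma --- it only remarks that the strengthening is ``noted in a footnote of'' De et al.~\cite{De2012} --- and your argument is precisely the standard list-decoding-to-extractor reduction carried out per value of $e$ and then averaged, which is what that remark alludes to; the only blemish is the harmless notational slip of writing the predictor's domain as $\{0,1\}^d$ rather than $[\bar n]$.
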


\subsection{One-bit extractors}
\label{app:known.onebit}

K\"onig and Terhal \cite{KT08} show that any one-bit extractor is
quantum\-/proof.
\begin{thm}[\protect{\cite[Theorem III.1]{KT08}}]
  \label{thm:1-bit-against-Q}
  Let $C : \{0,1\}^n \times \{0,1\}^t \to \{0,1\}$ be a
  $(k,\eps)$-strong extractor. Then $C$ is a quantum\-/proof $(k + \log
  1/\eps,3\sqrt{\eps})$-strong extractor.
\end{thm}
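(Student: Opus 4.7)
The plan is to follow the classical König--Terhal strategy: reduce the quantum-proof property to a single-bit classical guessing game, collapse the quantum side information to classical side information via the adversary's best measurement, and then combine a min-entropy splitting argument with the hypothesis that $C$ is already a classical $(k,\eps)$-strong extractor.

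First, I would exploit that the output of $C$ is a single bit. For any cq-state $\rho_{XE}$ with $\Hmin[\rho]{X|E} \geq k + \log(1/\eps)$, the trace distance
\[
\tfrac{1}{2}\trnorm{\rho_{C(X,Y)YE} - \tau_U \otimes \rho_Y \otimes \rho_E}
\]
coincides with the optimal distinguishing advantage of any POVM family $\{M_y^0, M_y^1\}_y$ acting on $E$ for each value of the (classical, public) seed $y$. It therefore suffices to bound $p := \Pr[B_Y = C(X,Y)]$ by $\tfrac12 + 3\sqrt{\eps}$, where $B_y$ is the outcome of measuring $E$ with $\{M_y^0, M_y^1\}$. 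After performing this measurement the joint state is fully classical on $(X, Y, B)$, and the guessing advantage is unchanged; the problem reduces to classical extraction with $B$ playing the role of an extra bit of classical side information. The subtlety is that the classical conditional min-entropy $\Hmin{X|Y,B=b}$ is a worst-case quantity over $b$, whereas the quantum hypothesis only controls an average.

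The heart of the argument is a min-entropy splitting step. Using a Markov-type estimate on $\Pr[B=b]\cdot 2^{-\Hmin{X|Y,B=b}}$, for any $\lambda > 0$ one obtains a ``good'' subset $\mathcal G$ of outcomes with $\Pr[B\in\mathcal G] \geq 1-\lambda$ and
\[
\Hmin{X|Y, B=b} \geq \Hmin[\rho]{X|YE} - \log(1/\lambda) \qquad (b \in \mathcal G).
\]
Choosing $\lambda \sim \sqrt{\eps}$ and spending the $\log(1/\eps)$ entropy budget from the hypothesis, the right-hand side is at least $k$ on $\mathcal G$. On that good event the classical $(k,\eps)$-strong property of $C$ gives a guessing bias at most $\eps$; on the complement the bias is trivially at most $\tfrac12$. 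Summing gives $p - \tfrac12 \leq \eps + \lambda/2 = O(\sqrt{\eps})$, and a careful tracking of the constants produces the advertised bound $3\sqrt{\eps}$.

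The main obstacle is this splitting step: one must show quantitatively that replacing the quantum register $E$ by a single classical bit $B$ drawn from a POVM outcome does not collapse the min-entropy of $X$ by more than $\log(1/\lambda)$ except on an event of probability $\lambda$. Relating the quantum min-entropy, defined via the \emph{optimal} POVM guessing probability, to the classical conditional min-entropy after \emph{one specific} POVM has been applied is exactly where the square-root loss arises and explains why the quantum theorem is strictly weaker than the classical Lemma~\ref{lem:classicalproof}.
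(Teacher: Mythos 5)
The paper does not actually prove this statement: Theorem~\ref{thm:1-bit-against-Q} is quoted verbatim from K\"onig and Terhal~\cite{KT08} and used as a black box (as are Lemma~\ref{lem:classicalproof} and Lemma~\ref{lem:1-bit-against-Q.bis}), so there is no in-paper proof to compare yours against. Judged on its own terms, your first step is sound and is indeed where the one-bit restriction enters: since the output $Z=C(X,Y)$ is a single bit and the state is block-diagonal in the public seed, $\frac12\trnorm{\rho_{ZYE}-\tau_{U}\otimes\rho_Y\otimes\rho_E}=\mathbb{E}_y\,\frac12\trnorm{\sigma^y_0-\sigma^y_1}$ with $\sigma^y_b=\sum_{x:C(x,y)=b}p_x\rho_x$, which by Holevo--Helstrom equals the advantage of the optimal seed-dependent binary measurement. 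Your Markov/min-entropy-splitting step is also a correct ingredient in isolation --- it is essentially the proof of Lemma~\ref{lem:classicalproof}.

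The gap is in gluing these together. The optimal POVM $\{M_y^0,M_y^1\}$ depends on $y$, so the induced bit $B=B_Y$ is \emph{not} side information independent of the seed: conditioned on $B=b$, the seed $Y$ is no longer uniform and becomes correlated with $X$. The strong-extractor (and classical-proof) guarantee is an average over a fresh seed that is independent of the source \emph{and} of the side information; it says nothing about $\Pr[B_Y=C(X,Y)]$ when $B$ is produced after seeing $Y$, and once you condition on a pair $(Y,B)=(y,b)$ the seed is fixed and the guarantee is vacuous. (If side information were allowed to depend on $y$, a single extra bit could encode $C(X,y)$ itself and guess perfectly while costing only one bit of min-entropy.) So the step ``on the good event the classical $(k,\eps)$-strong property gives bias at most $\eps$'' is not justified as written. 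Note also that splitting over the outcomes of a seed-\emph{independent} measurement costs only a factor of $2$ in the error with no square root --- that is exactly Lemma~\ref{lem:classicalproof} --- so the $\sqrt{\eps}$ cannot be attributed to the splitting step alone; it is the price of removing the seed-dependence of the adversary's measurement, which is the actual technical content of \cite{KT08} and is the piece missing from your sketch.
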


If we however have a construction which has already been shown to be a
\emph{classical\-/proof} $(k,\eps)$-strong extractor, then
Theorem~\ref{thm:1-bit-against-Q} can be refined as follows.
\begin{lem}[Implicit in \cite{KT08}]
  \label{lem:1-bit-against-Q.bis}
  Let $C : \{0,1\}^n \times \{0,1\}^t \to \{0,1\}$ be a
  classical\-/proof $(k,\eps)$-strong extractor. Then $C$ is a
  quantum\-/proof $(k,(1+\sqrt{2})\sqrt{\eps})$-strong extractor.
\end{lem}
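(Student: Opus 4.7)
The approach is to rerun the proof of Theorem~\ref{thm:1-bit-against-Q} of König--Terhal, tracking the fact that the hypothesis here is strictly stronger than ``$(k,\eps)$-strong extractor'': the extractor $C$ is already classical-proof. In the KT argument, one first invokes Lemma~\ref{lem:classicalproof} to promote an ordinary extractor to a classical-proof one, paying $\log(1/\eps)$ in min-entropy and a factor $2$ in error, and only then reduces quantum to classical side information. Since we are handed the classical-proof property for free, that first step is skipped entirely: the min-entropy threshold stays at $k$, and the only loss comes from the quantum-to-classical reduction, which is where the constant $(1+\sqrt{2})$ will arise.

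The key technical ingredient is the Hilbert--Schmidt / collision-probability method of Renner. Because $Z$ is a single bit, the $L^{1}$--$L^{2}$ inequality on the output register gives
\[
\trnorm{\rho_{ZYE} - \tau_{U}\tensor\rho_{YE}} \leq \sqrt{2}\,\|\rho_{ZYE} - \tau_{U}\tensor\rho_{YE}\|_{2}.
\]
I would then expand the right-hand side in terms of conditional collision probabilities, exploit the fact that, for a cq-state, $\ktrace{\rho_{ZYE}^{2}}$ equals the success probability of the pretty-good measurement on $E$, and then compare to the analogous classical quantity obtained by replacing the state on $E$ by the output distribution of an optimal POVM. Monotonicity of the trace distance under the measurement map lets me import the classical-proof bound $\eps$ into the collision-probability estimate.

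The main obstacle is isolating the constant $1+\sqrt{2}$. I expect the argument to split, via a triangle inequality applied to a well-chosen intermediate cq-state (the cq-state produced by measuring $E$ according to the optimal distinguisher), into two contributions: one bounded directly by the classical-proof hypothesis, yielding a trace-distance term of size $\eps$ in the squared norm and hence $\sqrt{\eps}$ after taking roots, and a second term controlled by Cauchy--Schwarz against the same classical quantity, contributing $\sqrt{2}\sqrt{\eps}$ after the $L^{1}\!\leq\!\sqrt{2}\,L^{2}$ conversion. Summing via the triangle inequality then gives exactly $(1+\sqrt{2})\sqrt{\eps}$. Everything else is bookkeeping with definitions of $\Hmin{X|E}$ and the trace norm; the only non-routine step is confirming that both sub-terms can indeed be tied back to the classical-proof error $\eps$ without any further min-entropy loss.
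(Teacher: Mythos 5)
Your first paragraph is exactly the paper's own justification: the lemma carries no proof in the paper at all \--- it is stated as ``Implicit in \cite{KT08}'' \--- and the intended argument is precisely that K\"onig and Terhal's proof of Theorem~\ref{thm:1-bit-against-Q} factors through their Proposition~1 (the $(k,\eps)\to(k+\log 1/\eps,2\eps)$ classical\-/proof conversion, Lemma~\ref{lem:classicalproof} here) followed by a quantum\-/to\-/classical reduction for one output bit, so that when the classical\-/proof property is already given, the first factor is dropped and only the $(1+\sqrt{2})\sqrt{\eps}$ loss of the second step survives. At that level your proposal and the paper coincide, and your identification of where the constant comes from ($\eps+\sqrt{2\eps}\leq(1+\sqrt{2})\sqrt{\eps}$) is consistent with the stated parameters.

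However, your elaboration of the second step \--- the only place where actual mathematics would have to happen \--- would not survive being written out. First, the inequality $\trnorm{A}\leq\sqrt{2}\,\|A\|_2$ is false here: the operator $\rho_{ZYE}-\tau_{U}\otimes\rho_{YE}$ acts on $Z\otimes Y\otimes E$ with $E$ of unbounded dimension, so the $L^1$\--$L^2$ conversion costs $\sqrt{\operatorname{rank}}$, not $\sqrt{2}$. Renner's conditional $L^2$ method repairs this with a weight operator on $E$, but the resulting bound is expressed through conditional collision entropy and the collision properties of the specific hash family; it cannot simply ``import'' the error $\eps$ of an arbitrary classical\-/proof extractor, and the claimed identity between $\ktrace{\rho_{ZYE}^2}$ and the pretty\-/good\-/measurement success probability is not correct either. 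Second, and more fundamentally, your intermediate cq\-/state ``obtained by measuring $E$ according to the optimal distinguisher'' hides the actual crux: the Helstrom measurement distinguishing the two conditional states of $E$ depends on the seed $y$, whereas Definition~\ref{def:quantumextractor} for classical\-/proof extractors only covers side information produced by a measurement fixed independently of the seed. Producing a single seed\-/independent measurement whose average classical advantage is at least (roughly) the square of the average quantum advantage is the entire content of the K\"onig\--Terhal lemma and the reason a square\-/root loss appears at all; that idea is absent from your sketch, so as a self\-/contained proof the proposal has a genuine gap, even though as a reduction to \cite{KT08} it matches the paper.
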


\subsection{Universal hashing}
\label{app:known.universal}

A family of hash functions is almost universal, if the
probability of a collision is low.
\begin{deff}[\cite{Sti94}] \label{def:universalhash} A family of hash
  functions $\{h : \cX \to \cZ\}$ is said to be $\delta$-almost
  universal$_2$ ($\delta$-AU$_2$), if for any $x,x' \in \cX$ with $x
  \neq x'$, \[ \Pr_h \left[h(x) = h(x') \right] \leq \delta,\] where
  the hash functions are chosen uniformly at random.

The family is said to be universal$_2$, if it is $\delta$-AU$_2$ with
$\delta = \frac{1}{|\cZ|}$.
\end{deff}

Tomamichel et al.~\cite{TSSR11} show that for such a family of hash
functions $\{h_y\}_y$, the corresponding extractor \--- defined as
$\Ext(x,y) \coloneqq h_y(x)$ \--- is quantum\-/proof if $\delta$ is small
enough.
\begin{thm}[\protect{\cite[Theorem 7]{TSSR11}}]
  \label{thm:universalhash}
  If a family of hash functions $\left\{h : \{0,1\}^n \to \{0,1\}^m
  \right\}$ is $\delta$-AU$_2$ for $\delta = \frac{1+2\eps^2}{2^m}$,
  then chosen uniformly at random, they build a quantum\-/proof $(m+4
  \log \frac{1}{\eps} +1,2\eps)$-strong extractor.
\end{thm}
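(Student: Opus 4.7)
The plan is to adapt the quantum leftover hash lemma (originally proven for universal$_2$ families by Renner) to the $\delta$-almost universal$_2$ setting. Fix an arbitrary classical-quantum state $\rho_{XE}$ with $\Hmin{X|E} \geq k$, let $Y$ be a uniform seed in $\{0,1\}^d$ independent of $(X,E)$, and set $Z = h_Y(X)$.

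The first step is to reduce the trace distance to a conditional collision probability. A quantum Cauchy--Schwarz argument using the $\rho_E$-weighted $2$-norm yields a bound of the form
\[
\frac{1}{2} \trnorm{\rho_{ZYE} - \tau_U \otimes \rho_Y \otimes \rho_E} \leq \frac{1}{2}\sqrt{2^m \, P_c(Z|YE)_\rho - 1},
\]
where $P_c(Z|YE)_\rho = \tr\bigl[(\rho_E^{-1/4} \rho_{ZYE} \rho_E^{-1/4})^2\bigr]$ is the conditional collision probability of the output relative to the side information.

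The second step is to control $P_c(Z|YE)_\rho$ via the $\delta$-AU$_2$ property. Writing out the expectation over the uniform seed $Y$ and the double sum over source values $x,x'$, one splits the expression into a diagonal part ($x=x'$) and an off-diagonal part ($x \neq x'$). The diagonal part equals the conditional collision probability of $X$ given $E$, which by $H_2 \geq \Hmin$ is at most $2^{-\Hmin{X|E}} \leq 2^{-k}$. The off-diagonal part factorises into the hash collision probability $\Pr_Y[h_Y(x) = h_Y(x')] \leq \delta$ and a sum of quantum inner products $\tr(\rho_{E|x}\rho_{E|x'})$ whose total is bounded by $\tr(\rho_E^2) \leq 1$. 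Hence $P_c(Z|YE)_\rho \leq \delta + 2^{-k}$.

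Combining these two bounds yields
\[
\frac{1}{2} \trnorm{\rho_{ZYE} - \tau_U \otimes \rho_Y \otimes \rho_E} \leq \frac{1}{2}\sqrt{2^{m-k} + 2^m \delta - 1}.
\]
Plugging in $\delta = (1 + 2\eps^2)/2^m$ gives $2^m \delta - 1 = 2\eps^2$, and the choice $k = m + 4\log(1/\eps) + 1$ gives $2^{m-k} = \eps^4/2$, so the right-hand side is at most $\tfrac{1}{2}\sqrt{2\eps^2 + \eps^4/2} \leq 2\eps$ for all $\eps \in (0,1]$, which is the desired bound. The main obstacle is the Cauchy--Schwarz step: one has to use precisely the operator-weighted $2$-norm with weight $\rho_E^{-1/4}$ on the side-information register so that the expansion of the squared norm reduces to a genuine sum of collision probabilities, allowing the off-diagonal contributions to be controlled by only the classical $\delta$-AU$_2$ collision probability without any additional quantum correction terms. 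This is essentially Renner's extension of the leftover hash lemma to quantum side information, which Tomamichel et al.\ then generalise from universal$_2$ to almost-universal$_2$ families.
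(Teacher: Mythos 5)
The paper does not actually prove Theorem~\ref{thm:universalhash}: it is imported verbatim as a known result, namely Theorem~7 of Tomamichel et al.~\cite{TSSR11}, and lives in the appendix of ``known extractor results,'' so there is no in-paper proof to compare yours against. That said, your sketch is a faithful reconstruction of the standard argument behind that cited theorem (Renner's quantum leftover hash lemma, extended to $\delta$-almost universal$_2$ families): the Cauchy--Schwarz reduction of the trace distance to a weighted conditional collision quantity, the diagonal/off-diagonal split in which the diagonal term is controlled by the (collision, hence min-) entropy of the source and the off-diagonal term factorises through the classical collision probability $\Pr_Y[h_Y(x)=h_Y(x')]\leq\delta$, and the final arithmetic $2^m\delta-1=2\eps^2$, $2^{m-k}=\eps^4/2$ all check out and do give a distance below $2\eps$ (in fact below $\eps$; the cited theorem is stated for \emph{smooth} min-entropy, which is where the extra $\eps$ in ``$2\eps$'' comes from, so your non-smooth version being slightly stronger is consistent). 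One technical point you should tighten: weighting by $\rho_E^{-1/4}$ ties your diagonal term to $\Hmin{X|E}$ \emph{relative to $\rho_E$}, whereas the hypothesis $\Hmin{X|E}\geq k$ refers to the min-entropy optimised over conditioning states $\sigma_E$; the Cauchy--Schwarz step works verbatim with an arbitrary $\sigma_E^{-1/4}$ weight, so the fix is to run the argument with the optimiser of $\Hmin{X|E}$ in place of $\rho_E$ (taking care that the off-diagonal sum is then bounded by $\tr\bigl[(\sigma_E^{-1/4}\rho_E\sigma_E^{-1/4})^2\bigr]$ rather than trivially by $1$, which is exactly the point where \cite{TSSR11} have to work harder than Renner's original $\sigma_E=\rho_E$ version). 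This is a standard and repairable subtlety rather than a conceptual gap.
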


\subsection{Trevisan's extractor}
\label{app:known.trevisan}

In \cite[Theorem 4.6]{De2012}, De et al.\ show that if a $(k,\eps)$-strong one-bit
extractor is used in Trevisan's construction, the final extractor is a
quantum\-/proof $(k+rm+\log 1/\eps,3m\sqrt{\eps})$-strong extractor,
where $m$ is the output length and $r$ is a parameter of the weak
design.

That theorem is the combination of the following implicit lemma and
Lemma~\ref{lem:classicalproof}.

\begin{lem}[Implicit in \cite{De2012}]
  \label{lem:implicit}
  Let $C : \{0,1\}^n \times \{0,1\}^t \to \{0,1\}$ be a
  \emph{quantum\-/proof} $(k,\eps)$-strong extractor with uniform seed
  and $S_1,\dotsc,S_m \subset [d]$ a weak $(m,t,r,d)$-design. Then
  Trevisan's extractor, $\Ext_C : \{0,1\}^n \times \{0,1\}^d \to
  \{0,1\}^m$, is a quantum\-/proof $(k + rm, m\eps)$-strong extractor.
\end{lem}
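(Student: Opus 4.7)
The plan is to adapt the classical Trevisan/Nisan--Wigderson hybrid-plus-reconstruction argument to the quantum side-information setting, using the quantum-proof property of $C$ together with a short classical advice string whose length is controlled by the weak-design condition. Write the extractor output as $\Ext_C(X,Y)=Z_1\cdots Z_m$ with $Z_i=C(X,Y_{S_i})$, and introduce the hybrid states
\[
 \sigma_i\coloneqq \rho_{Z_1\cdots Z_i Y E}\otimes\tau_U^{\otimes(m-i)},
\]
so that (using $Y$ uniform and independent of $(X,E)$) $\sigma_0=\tau_U^{\otimes m}\otimes\rho_Y\otimes\rho_E$ is the ideal state and $\sigma_m=\rho_{\Ext_C(X,Y)YE}$. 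By the triangle inequality, the desired bound $\frac{1}{2}\trnorm{\sigma_m-\sigma_0}\leq m\eps$ reduces to showing, for each $i$, that $Z_i$ is $\eps$-close in trace distance to a uniform bit when conditioned on $(Z_1,\ldots,Z_{i-1},Y,E)$.

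To prove the per-step bound, I would construct a classical reconstruction register $T$ that lets the past outputs be computed from the current seed plus a small amount of advice, without spoiling seed independence. Split $Y=(Y_{S_i},Y_{\bar S_i})$. For each $j<i$ and each $y^{\ast}\in\{0,1\}^{|S_j\cap S_i|}$, store the bit $T^{(j)}_{y^{\ast}}\coloneqq C\bigl(X,(y^{\ast},Y_{S_j\setminus S_i})\bigr)$, and let $T$ be the concatenation of all such bits. Then $T$ is a deterministic function of $(X,Y_{\bar S_i})$ alone---crucially not of $Y_{S_i}$---while for every value of $Y_{S_i}$ one recovers $Z_j=T^{(j)}_{Y_{S_j\cap S_i}}$, so $(Z_1,\ldots,Z_{i-1})$ is a deterministic function of $(Y_{S_i},T)$. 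The total length of $T$ equals $\sum_{j<i}2^{|S_j\cap S_i|}$, which by the weak-design property (Definition~\ref{def:weakdesign}) is at most $rm$.

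With $E'\coloneqq(E,Y_{\bar S_i},T)$ as the new side information, I would verify the hypotheses of the one-bit extractor applied to source $X$ and seed $Y_{S_i}$: the seed is uniform and independent of $(X,E')$ because $Y$ is independent of $(X,E)$, the substrings $Y_{S_i}$ and $Y_{\bar S_i}$ are disjoint parts of a uniform string, and $T$ was built without any reference to $Y_{S_i}$; moreover the chain rule for quantum min-entropy against at most $rm$ classical bits gives $\Hmin{X|E'}\geq \Hmin{X|E}-rm\geq k$. The assumed quantum-proof $(k,\eps)$-strong extractor property of $C$ then yields $\frac{1}{2}\trnorm{\rho_{Z_i Y_{S_i} E'}-\tau_U\otimes\rho_{Y_{S_i}E'}}\leq\eps$. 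Because $(Z_1,\ldots,Z_{i-1})$ is a deterministic function of $(Y_{S_i},T)$, appending it to both states does not increase trace distance, and then tracing out $T$ likewise does not, giving the per-step bound $\frac{1}{2}\trnorm{\rho_{Z_i Y E Z_1\cdots Z_{i-1}}-\tau_U\otimes\rho_{Y E Z_1\cdots Z_{i-1}}}\leq\eps$. Summing over $i=1,\ldots,m$ finishes the proof.

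The delicate part, and the reason a naive reduction fails, is that the past outputs $Z_1,\ldots,Z_{i-1}$ are correlated with the current seed $Y_{S_i}$ through the intersections $S_j\cap S_i$, so they cannot simply be dumped into the side information without violating the seed-independence hypothesis of the one-bit extractor. The reconstruction table $T$ is the device that side-steps this: it routes the entire dependence on $Y_{S_i}$ through the extractor's own inputs rather than through its side information, and the weak-design condition is exactly the combinatorial guarantee that makes $|T|\leq rm$, so that the min-entropy chain rule still leaves $k$ units of usable entropy for the one-bit extractor step.
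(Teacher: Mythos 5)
Your proof is correct and follows essentially the same route as the argument the paper relies on: the paper gives no proof of Lemma~\ref{lem:implicit} itself but attributes it to De et al.~\cite{De2012}, whose proof is exactly this hybrid decomposition combined with the reconstruction table of size $\sum_{j<i}2^{|S_j\cap S_i|}\leq rm$ (built from the seed bits outside $S_i$ so as to preserve independence from $Y_{S_i}$) and the min-entropy chain rule for classical advice. Your write-up correctly identifies the one delicate point, namely that the past bits must be routed through the advice register rather than conditioned on directly.
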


If we use a one-bit extractor which is known to be quantum proof, we
get better parameters from Lemma~\ref{lem:implicit} than
\cite[Theorem 4.6]{De2012}.

\section{Weak design proofs}
\label{app:weakdesignproofs}

\subsection{Basic construction}
\label{app:weakdesignproofs.basic}

\begin{lem}
  The weak design construction described in
  Section~\ref{sec:derivations.weakdesigns.basic} has $r < 2e$.
\end{lem}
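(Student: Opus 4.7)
My plan is to partition $\{q<p\}$ into levels according to the highest-order coefficient at which $q$ differs from $p$, then bound each level's contribution separately by a double-counting argument. Let $p$ have coefficient vector $(\alpha_0^\ast,\dotsc,\alpha_c^\ast)$, which is by construction the base-$t$ expansion of $n(p)-1$ with $\alpha_0^\ast$ as the least significant digit. Since the ordering on polynomials coincides with the natural order of the integers $\sum_i\alpha_i(q)t^i$, every $q<p$ has a unique level $i\in\{0,\dotsc,c\}$ characterised by $\alpha_j(q)=\alpha_j^\ast$ for $j>i$ and $\alpha_i(q)<\alpha_i^\ast$.

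The key observation driving the estimate is that in level $i$, the difference $r=p-q$ has leading coefficient $\alpha_i^\ast-\alpha_i(q)\in\{1,\dotsc,\alpha_i^\ast\}$, while its lower coefficients vary freely over $\GF{t}$ as $\alpha_0(q),\dotsc,\alpha_{i-1}(q)$ do. Hence $q\mapsto p-q$ is a bijection from the level-$i$ polynomials onto the set $R_i$ of polynomials in $\GF{t}[x]$ of degree exactly $i$ with leading coefficient in $\{1,\dotsc,\alpha_i^\ast\}$, and $|R_i|=\alpha_i^\ast t^i$. Since $|S_p\cap S_q|$ equals the number of roots of $r=p-q$ in $\GF{t}$, it suffices to bound $\sum_{r\in R_i}2^{\#\mathrm{roots}(r)}$.

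Expanding $2^{\#\mathrm{roots}(r)}=|\{A\subseteq\GF{t}:A\subseteq\mathrm{roots}(r)\}|$ and interchanging summations, I count for each $A$ with $|A|=j$ the elements of $R_i$ divisible by the monic polynomial $\prod_{a\in A}(x-a)$. Writing $r=s\cdot\prod_{a\in A}(x-a)$ with $\deg s=i-j$ shows that $r$ and $s$ share their leading coefficient, so there are $\alpha_i^\ast t^{i-j}$ such $r$ for $j\leq i$ and none otherwise. The standard estimate $\binom{t}{j}\leq t^j/j!$ then yields $\sum_{r\in R_i}2^{\#\mathrm{roots}(r)}\leq\alpha_i^\ast t^i\sum_{j\geq 0}1/j!=e\,\alpha_i^\ast t^i$, and summing over levels with $\sum_i\alpha_i^\ast t^i=n(p)-1\leq m-1$ produces $\sum_{q<p}2^{|S_p\cap S_q|}\leq e(m-1)<2em$, the desired bound (in fact with the sharper constant $e$ in place of $2e$). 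The only delicate step is the identification of level-$i$ polynomials with $R_i$, which hinges on the particular lexicographic ordering prescribed by the construction; the remaining estimates are routine.
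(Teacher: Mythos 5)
Your argument is correct as far as I can verify, but it is a genuinely different — and in fact stronger — proof than the one in the paper. The paper does not count anything directly: it invokes Ma and Tan's theorem that the construction has $r<e$ whenever $t^c$ divides $m$, and handles general $m$ by observing that the design for $m$ is the design for the next multiple $m'=(k+1)t^c$ with the trailing sets dropped, so the overlap sum only decreases; the rounding $m\mapsto m'$ costs a factor $\tfrac{k+1}{k}\le 2$, whence $r<2e$. Your proof is self-contained and avoids the rounding loss entirely: the level decomposition by the most significant digit at which $q$ differs from $p$ turns $\{q<p\}$ into a disjoint union of ``boxes'' of sizes $\alpha_i^\ast t^i$ summing to $n(p)-1$, and the divisor-counting argument bounds each box's contribution by $e\,\alpha_i^\ast t^i$, giving $\sum_{q<p}2^{|S_p\cap S_q|}\le e(m-1)<em$. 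I checked the key counting step (the number of $r\in R_i$ divisible by a monic degree-$j$ polynomial is $\alpha_i^\ast t^{i-j}$ for $j\le i$ and $0$ otherwise) against small examples over $\GF{2}$ and it is exact. Two remarks. First, your statement that the leading coefficient of $p-q$ lies in $\{1,\dotsc,\alpha_i^\ast\}$ is literally true only for prime $t$; for a general prime power the integer labels and field subtraction do not align, and one only gets that these leading coefficients form \emph{some} set of $\alpha_i^\ast$ distinct nonzero field elements — which is all your argument actually uses, so this is cosmetic but should be phrased correctly. Second, be aware that your bound $r\le e$ for arbitrary $m$ is precisely the conjecture of Ma and Tan that this paper cites as open (footnote in Section~\ref{sec:derivations.weakdesigns}); if your argument withstands further scrutiny it settles that conjecture and would let every $rm$ term in the paper's parameter tables be improved from $2em$ to $em$, so it deserves to be written up carefully rather than buried as a proof of the weaker $r<2e$ statement.
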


\begin{proof}
Ma and Tan \cite{MT11} prove that if $m \in [t^c,t^{c+1}]$ and $t^c$
divides $m$, then the weak design has $r < e$. The lemma is thus
immediate for $m = k t^c$ and any integer $1 \leq k \leq t$. 

Let $k t^c < m < (k+1) t^{c}$ for some integer $1 \leq k < t$. Since
the construction for $m$ is the same as the construction for $m' =
(k+1) t^{c}$ with the last sets $S_p$ dropped, the overlap can only
decrease. Thus \[ \sum_{q < p} 2^{|S_q \cap S_p|} < em' =
\frac{(k+1)e}{k}kt^c < \frac{k+1}{k} em \leq 2em. \qedhere\]
\end{proof}

\subsection{Reducing the overlap}
\label{app:weakdesignproofs.reducing}

\begin{lem}
  The weak design construction described in
  Section~\ref{sec:derivations.weakdesigns.reducing} has $r = 1$.
\end{lem}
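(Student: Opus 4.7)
The plan is to exploit the block-diagonal structure of $W$. Because the sub-designs $W_{\text{B},0}, \ldots, W_{\text{B},\ell}$ sit in disjoint column ranges of $W$, any two sets $S_i, S_j$ drawn from different blocks are disjoint, so they contribute $2^{|S_i \cap S_j|} = 1$ to the overlap sum. For any fixed set $S_i$ in block $b$ (with the blocks indexed $0, \ldots, \ell$), I would split the overlap sum $\sum_{j<i} 2^{|S_j \cap S_i|}$ into two parts: an inter-block contribution of $S_{b-1} \coloneqq \sum_{j=0}^{b-1} m_j$ (one summand of $1$ per set in an earlier block), plus an intra-block contribution from the sets preceding $S_i$ within block $b$ itself. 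Proving $r = 1$ then reduces to showing, for every $b$, that the total is at most $m = \sum_{j=0}^{\ell} m_j$.

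For a non-final block ($b < \ell$), the intra-block contribution is bounded by $r' m_b = 2e \cdot m_b$ via the bound on the basic construction established in Appendix~\ref{app:weakdesignproofs.basic}, and the required inequality $S_{b-1} + r' m_b \leq m$ rearranges to $(r'-1) m_b \leq m - S_b$. Substituting the closed form $S_b = \lceil T_b \rceil$ with $T_b \coloneqq (m-r')(1-(1-1/r')^{b+1})$ obtained from \eqref{eq:r.m}, one checks that in the continuous approximation $(r'-1) m_b$ and $m - S_b$ differ by exactly $r'$---this is the slack the specific choice of $m_b$ was designed to leave. The remaining bookkeeping, which I expect to be the main technical obstacle, is to absorb the ceiling errors: applying $\lceil T_b \rceil \leq T_b + 1$ on the larger side and $\lceil T_{b-1} \rceil \geq T_{b-1}$ on the smaller, the algebra collapses to the identity $r' T_b - (r'-1) T_{b-1} = m - r'$, which leaves just enough margin for the $+1$ incurred by the ceiling.

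The final block ($b = \ell$) requires a different argument that leverages the specific choice of $\ell$ in \eqref{eq:r.ell}. Rearranging that definition yields $(m-r')(1-1/r')^{\ell} \leq t - r'$, whence $T_{\ell-1} \geq m - t$ and therefore $m_\ell \leq t$. But in the basic construction, $m_\ell \leq t$ sets correspond to polynomials of degree $0$, \ie, constants $p \in \GF{t}$, and the sets $S_p = \{(x,p) : x \in \GF{t}\}$ for distinct constants are pairwise disjoint, so every intra-block overlap within block $\ell$ vanishes. The intra-block contribution is therefore at most $m_\ell - 1$ (one count of $1$ per preceding set), making the total overlap sum at most $S_{\ell-1} + m_\ell - 1 = m - 1 < m$, as required. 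The edge case $\ell = 1$ forced by $\max\{1,\cdot\}$ in \eqref{eq:r.ell} is absorbed automatically by this dichotomy.
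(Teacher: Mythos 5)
Your proposal is correct and follows essentially the same route as the paper's proof: the same block decomposition (inter-block intersections empty, intra-block contribution bounded by $r'm_b$ via the basic design), the same geometric-series identity $\sum_{j\le b-1}n_j + r'n_b = m - r'$, and the same treatment of the final block via $m_\ell \le t$ forcing constant polynomials and hence pairwise disjoint sets. The only cosmetic difference is in the ceiling bookkeeping: you bound the partial sums $\lceil T_b\rceil \le T_b+1$ directly and land on $\le m$, whereas the paper reaches the strict bound $<m+1$ and then invokes integrality.
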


\begin{proof}
  For simplicity, we number the sets of the weak design $W$ with two
  indices $(i,j)$, where $0 \leq i \leq \ell$ and $1 \leq j \leq m_i$,
  and label the corresponding set of the basic weak design $S^i_j$. We
  need to show that the second condition of
  Definition~\ref{def:weakdesign} holds for $r = 1$, namely that for
  all $(i,j)$,
  \[ \sum_{(g,h) < (i,j)} 2^{|S_{g,h} \cap S_{i,j}|} \leq m,\] where
  $\{(g,h) : (g,h) < (i,j) \} \coloneqq \bigcup_{g < i} \{(g,h) : h
  \leq m_g\} \cup \{(i,h) : h < j\}$.

  Note that \eqref{eq:r.m} implies that for all $0 \leq k \leq \ell -
  1$, \begin{equation} \label{eq:proofs.r.1} \sum_{j \leq k} n_j \leq
    \sum_{j \leq k} m_j < \sum_{j \leq k} n_j +1,\end{equation} from
  which we get \begin{equation} \label{eq:proofs.r.2} m_k < \sum_{j
      \leq k} n_j - \sum_{j \leq k-1} m_j + 1 \leq n_k +
    1.\end{equation} Furthermore, from the sum of a geometric series,
  we have \begin{align} \sum_{j \leq k-1} n_j + r'n_k & =
    \frac{1-\left(1-\frac{1}{r'}\right)^k}{1-\left(1-\frac{1}{r'}\right)}n_0
    + r' \left(1-\frac{1}{r'}\right)^k n_0 \notag \\ & = r'n_0 =
    m-r'. \label{eq:proofs.r.3} \end{align}

For any two sets $S_{i,j}$ and $S_{g,h}$ with $i \neq g$, we have
$|S_{g,h} \cap S_{i,j}| = 0$. Thus for any set $S_{i,j}$ with $i \leq
\ell -1$, we have 
\begin{align*} \sum_{(g,h) < (i,j)} 2^{|S_{g,h} \cap S_{i,j}|} & =
  \sum_{g < i, h \leq m_g} 1 + \sum_{h < j} 2^{|S^i_{h} \cap S^i_{j}|}
  \\ & \leq \sum_{g < i}
  m_g + r'm_i \\
  & < \sum_{g < i} n_g + 1+ r'(n_i+1) \\
  & = m+1,\end{align*} where we used \eqref{eq:proofs.r.1} and
\eqref{eq:proofs.r.2} in the second from the last line, and
\eqref{eq:proofs.r.3} in the last line. Since the LHS of
the above inequality is an integer, and the inequality is strict, we
must have \[ \sum_{(g,h) < (i,j)} 2^{|S_{g,h} \cap S_{i,j}|} \leq m.\]

Finally, for the case of $S_{\ell,j}$, note that $\ell$ was chosen
such that $m_\ell \leq t$. This can be seen as follows.
\begin{align*}
m_\ell & = m - \sum_{j \leq \ell-1} m_j \leq m - \sum_{j \leq \ell-1}
n_j \\
    & = m -
    \frac{1-\left(1-\frac{1}{r'}\right)^\ell}{1-\left(1-\frac{1}{r'}\right)}\left(\frac{m}{r'}
      - 1\right) \\
    & = r' + \left(1-\frac{1}{r'}\right)^\ell \left(m - r'\right).
\end{align*}
By plugging \eqref{eq:r.ell} in this, we get $m_\ell \leq t$. Since
$t$ is the size of the finite field, the polynomial used to generate
the elements of $S_{\ell,j}$ has all coefficients $0$, except the
constant term which is $j$. We thus have $S^\ell_j = \{(x,j)\}_{x \in
  \GF{t}}$, and so the sets $\{S_{\ell,j}\}_{j \in \GF{t}}$
have no intersection. Hence
 \[\sum_{(g,h) < (\ell,j)} 2^{|S_{g,h} \cap S_{\ell,j}|} \leq \sum_{g
    \leq \ell} m_\ell = m. \qedhere\]
\end{proof}


\bibliographystyle{apsrev}
\bibliography{trevisan}
\end{appendix}
\end{document}